\newtheorem{definition}{Definition}[section]
\newtheorem{theorem}{Theorem}[section]
\newtheorem{lemma}{Lemma}[section]
\newtheorem{obs}{Observation}[section]
\title{Some Algorithmic Results on Restrained Domination in Graphs}
\author[1]{Arti Pandey\thanks{artipandey2305@gmail.com}}
\author[2]{B. S. Panda\thanks{bspanda@maths.iitd.ac.in}}
\affil[1]{Department of Computer Science and Engineering, IIIT Guwahati
\newline Ambari, G. N. B. Road, Guwahati 781001, INDIA}
\affil[2]{Department of Mathematics, Indian Institute of Technology Delhi
\newline Hauz Khas, New Delhi 110016, INDIA}
\begin{document}
\maketitle
\begin{abstract}
A set $D\subseteq V$ of a graph $G=(V,E)$ is called a restrained
dominating set of $G$ if every vertex not in $D$ is adjacent to a
vertex in $D$ and to a vertex in $V \setminus D$. The
\textsc{Minimum Restrained Domination} problem is to find a
restrained dominating set of  minimum cardinality. Given a graph
$G$, and a positive integer $k$, the \textsc{Restrained Domination
Decision} problem  is to decide whether $G$ has a restrained
dominating set of cardinality a most $k$. The \textsc{Restrained
Domination Decision} problem  is known to be NP-complete for chordal
graphs. In this paper, we strengthen this NP-completeness result by
showing that the \textsc{Restrained Domination Decision} problem
remains NP-complete for doubly chordal graphs, a subclass of chordal
graphs. We also propose a polynomial time algorithm to solve the
\textsc{Minimum Restrained Domination} problem in block graphs, a
subclass of doubly chordal graphs. The \textsc{Restrained Domination
Decision} problem is also known to be NP-complete for split graphs.
We propose a polynomial time algorithm to compute a minimum
restrained dominating set of threshold graphs, a subclass of split
graphs. In addition, we also propose polynomial time algorithms to
solve the \textsc{Minimum Restrained Domination} problem in cographs
and chain graphs. Finally, we give a new improved upper bound on
the restrained domination number, cardinality of a minimum
restrained dominating set in terms of number of vertices and minimum
degree of graph. We also give a randomized algorithm to find a
restrained dominating set whose cardinality satisfy our upper bound
with a positive probability.
\end{abstract}
\vspace*{.2cm}
 Keywords:  Domination, Restrained domination, NP-completeness, Chordal graphs, Doubly chordal graphs, Threshold graphs, Cographs, Chain graphs.
\section{Introduction}
For a graph $G=(V,E)$, the sets $N_{G}(v)=\{u\in V(G)
\mid uv\in E\}$ and $N_{G}[v]=N_{G}(v)\cup \{v\}$ denote the
\emph{open neighborhood} and \emph{closed neighborhood} of a vertex
v, respectively. A vertex $v$ of a graph $G$ is said to \emph{dominate} a vertex $w$ if $w \in N_{G}[v]$. A set $D\subseteq V$ is
a \emph{dominating  set} of $G$ if every vertex of $G$ is dominated by at least one vertex in $D$.
 The \textsc{Minimum Domination} problem is to find a dominating set of minimum cardinality.
Given a graph $G$, and a positive integer $k$, the \textsc{Domination Decision} problem is to decide whether $G$
has a dominating set of cardinality at most $k$.
 The \emph{domination number} of a graph $G$,
denoted by $\gamma(G),$ is the cardinality of a minimum dominating
set of $G$. The concept of domination and its variations are widely
studied as can be seen in \cite{haynes2,haynes1}.

A dominating set $D$ is called a \emph{restrained dominating set} if every vertex not in $D$ is adjacent
to some other vertex in $V \setminus D$. The \emph{restrained domination number} of a graph $G$, denoted by
$\gamma_{r}(G),$ is the cardinality of a minimum restrained dominating set of $G$. The concept of restrained domination
was introduced by Telle and Proskurowski \cite{telle} in $1997$, albeit indirectly, as a vertex partitioning problem.
The restrained domination has been widely studied, see \cite{chen,domke1,domke2,hatt1,hatt2,hatt3,hatt4,hatt5,henning,thesis,zelinka}. An application of the concept of restrained domination is that of prisoners and guards. Each vertex not in the restrained dominating set corresponds to a position of a prisoner, and every vertex in the restrained dominating set corresponds to a position of a guard. Note that position of each prisoner is observed by a guard (to effect security) while position of each prisoner is also seen by at least one other prisoner (to protect the rights of prisoners). To minimize the cost, we want to place as few guards as possible. The restrained domination problem and its decision version are as follows:\\

\noindent\underline{\textsc{Minimum Restrained Domination (MRD)} problem}
\begin{description}
  \item[Instance:] A graph $G=(V,E)$.
  \item[Solution:] A restrained dominating set $D_r$ of $G$.
  \item[Measure:] Cardinality of $D_r$.
\end{description}

\noindent\underline{\textsc{Restrained Domination Decision (RDD)} problem}
\begin{description}
  \item[Instance:] A graph $G=(V,E)$ and a positive integer $k\leq |V|$.
  \item[Question:] Does there exist a restrained dominating set $D_r$ of $G$ such that $|D_r|\leq k$?
\end{description}

In the algorithmic graph theory, we are mainly interested in the borderline between polynomial time and NP-completeness for a given graph problem. One hierarchy of graph classes is: trees $\subset$ block graphs $\subset$ doubly chordal graphs $\subset$ chordal graphs. In this hierarchy polynomial-time algorithm for restrained domination problem is known only for trees, while  it is known to be NP-complete for chordal graphs. Here we emphasize on the gap of complexity between block graphs and doubly chordal graphs. We prove that the \textsc{Restrained Domination Decision} problem is NP-complete for doubly chordal graphs and present a dynamic programming based polynomial time algorithm to compute the cardinality of a minimum restrained dominating set for block graphs. We also study the \textsc{Minimum Restrained Domination} problem on threshold graphs, cographs, and chain graphs.

 It is also interesting to see whether there exists graph classes where domination and restrained domination problems differ in complexity. The \textsc{Minimum Domination} problem is polynomial time solvable for doubly chordal graphs~\cite{andreas}, but here we prove that the \textsc{Restrained Domination Decision} problem is NP-complete for this graph class. On the other hand, we propose a graph class, where the \textsc{Minimum Restrained Domination} problem is easily solvable, but the \textsc{Domination Decision} problem is NP-complete. Next, we give a new upper bound on the restrained domination number using probabilistic approach. We also give a randomized algorithm to find a restrained dominating set of a graph whose expected cardinality satisfy the new upper bound.

 The paper is organized as follows. In Section~\ref{sec:2}, some
pertinent definitions and some preliminary results are discussed. In
Section~\ref{sec:3}, we have shown that the \textsc{Restrained Domination Decision} problem is NP-complete for doubly chordal graphs. In
Section~\ref{sec:4}, we have shown the graph classes where the
\textsc{Minimum Domination} problem and the \textsc{Minimum Restrained Domination} problem
differ in complexity. In Section~\ref{sec:5}, we proposed a dynamic programming based algorithm
 to find a minimum restrained dominating set of block graphs. 
  In Section~\ref{sec:7},
 we studied the \textsc{Minimum Restrained Domination} problem in threshold graphs. In Section~\ref{sec:8},
 we studied the \textsc{Minimum Restrained Domination} problem in cographs. In Section~\ref{sec:9}, we studied
 the \textsc{Minimum Restrained Domination} problem in chain graphs. In Section~\ref{sec:10}, we studied a new upper
  bound on the restrained domination number of a graph, and we also proposed a randomized algorithm to find a restrained
  dominating set, whose expected cardinality satisfy the new upper bound.
   In Section~\ref{sec:11}, we conclude the paper.

\section{Preliminaries}
\label{sec:2} For a graph $G = (V,E)$, the \emph{degree} of a vertex $v$ is $|N_G(v)|$
and is denoted by $d_G(v)$. If $d_G(v)=1$, then $v$ is called a
\emph{pendant vertex}. For a set $S \subseteq V$ of the graph $G=(V,E)$, the subgraph of
$G$ induced by $S$ is defined as $G[S]=(S,E_{S})$, where $E_{S}=\{xy
\in E | x, y \in S\}$. If $G[C]$,
$C\subseteq V$, is a complete subgraph of $G$, then $C$ is called a
\emph{clique} of $G$. A graph $G = (V,E)$ is said to be \emph{bipartite} if $V(G)$
can be partitioned into two disjoint sets $X$  and $Y$ such that
every edge of $G$ joins a vertex in $X$ to a vertex  in $Y$, and such a
partition $(X,Y)$ of $V$ is called a \emph{bipartition}. A bipartite
graph with bipartition $(X,Y)$ of $V$ is denoted by $G=(X,Y,E)$. A graph $G$ is
said to be a \emph{chordal graph} if every cycle in $G$ of length at
least four has a chord, i.e., an edge joining two non-consecutive
vertices of the cycle. A chordal graph $G=(V,E)$ is a \emph{split
graph} if $V$ can be partitioned into two sets $I$ and $C$ such that
$C$ is a clique and $I$ is an independent set. A
vertex $v\in V(G)$ is a \emph{simplicial vertex} of $G$ if $N_G[v]$
is a clique of $G$. An ordering $\alpha=(v_1,v_2,...,v_n)$ is a {\it
perfect elimination ordering} (PEO) of $G$ if $v_i$ is a simplicial
vertex of $G_i=G[\{v_i,v_{i+1},...,v_n\}]$ for all $i$, $1\leq i\leq
n$. We have the following characterization for chordal graphs.
\begin{theorem}[\cite{gross}]
A graph $G$ has a PEO if and only if $G$ is chordal.
\end{theorem}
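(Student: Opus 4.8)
The statement is the classical Fulkerson--Gross characterization of chordal graphs, so the plan is to prove the two implications separately, with the real content in the ``chordal $\Rightarrow$ PEO'' direction.

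First I would handle the easy direction, PEO $\Rightarrow$ chordal, by contradiction. Suppose $\alpha=(v_1,\dots,v_n)$ is a PEO and $C$ is a chordless cycle of $G$ of length at least four. Let $v_j$ be the vertex of $C$ occurring earliest in $\alpha$, and let $x,y$ be its two neighbours along $C$. Both $x$ and $y$ occur after $v_j$ in $\alpha$, hence both lie in $G_j=G[\{v_j,\dots,v_n\}]$; since $v_j$ is simplicial in $G_j$, its neighbourhood in $G_j$ is a clique, forcing $xy\in E(G)$. But $x$ and $y$ are non-consecutive on $C$ (as $|C|\ge 4$), so $xy$ is a chord of $C$, a contradiction. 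I expect no difficulty here.

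For chordal $\Rightarrow$ PEO, the engine is Dirac's lemma: every chordal graph that is not complete has two non-adjacent simplicial vertices; in particular every chordal graph has a simplicial vertex. Granting the lemma, I would build the ordering by induction on $|V(G)|$: pick a simplicial vertex $v$ (any vertex if $G$ is complete), observe that $G-v$ is chordal because induced subgraphs of chordal graphs are chordal, obtain a PEO of $G-v$ from the induction hypothesis, and place $v$ first. To prove the lemma itself I would again induct on $|V(G)|$. If $G$ is complete there is nothing to do, so take non-adjacent vertices $a,b$ and a minimal vertex set $S$ whose removal separates $a$ from $b$; let $A$ and $B$ be the components of $G-S$ containing $a$ and $b$. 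Minimality of $S$ gives that every vertex of $S$ has a neighbour in $A$ and a neighbour in $B$. The key step is that $S$ is a clique: for $u,w\in S$, splice a shortest $u$--$w$ path through $A$ with a shortest $u$--$w$ path through $B$ into a cycle of length at least four whose only admissible chord is $uw$. Applying the induction hypothesis to the chordal graph $H_A=G[V(A)\cup S]$: if $H_A$ is complete any vertex of $A$ is simplicial in it, and otherwise $H_A$ has two non-adjacent simplicial vertices, at least one of which must lie in $A$ since $S$ is a clique. Either way we get $s_A\in A$ simplicial in $H_A$; since $A$ is a component of $G-S$, $s_A$ has no neighbours outside $V(A)\cup S$, so $s_A$ is simplicial in $G$. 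Symmetrically one obtains $s_B\in B$ simplicial in $G$, and $s_A,s_B$ are non-adjacent because they lie in different components of $G-S$, which proves the lemma.

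The main obstacle is establishing that the minimal separator $S$ is a clique, which is the one place chordality is genuinely used in this direction: it requires the shortest-path cycle construction together with the case analysis showing that a chord of that cycle can lie neither within either shortest path nor across the two components, and therefore must be the edge $uw$ inside $S$. Everything else is routine induction, plus the observation that neighbourhoods of vertices sitting strictly inside a component are identical in $G$ and in $H_A$.
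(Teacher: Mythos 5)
The paper does not prove this statement at all: it is quoted as a known characterization with a citation to Fulkerson and Gross, so there is no in-paper argument to compare yours against. Your proposal is the standard and correct textbook proof --- the easy direction by taking the earliest vertex of a putative chordless cycle in the ordering, and the converse via Dirac's lemma that a non-complete chordal graph has two non-adjacent simplicial vertices, established by induction through a minimal $a$--$b$ separator $S$ that is shown to be a clique. All the steps you flag (existence of the $u$--$w$ paths through $A$ and $B$ from the minimality of $S$, the impossibility of chords other than $uw$ because there are no $A$--$B$ edges and shortest paths are induced, and the fact that two non-adjacent simplicial vertices of $H_A=G[V(A)\cup S]$ cannot both lie in the clique $S$) are exactly the right ones, and the argument goes through.
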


A vertex $u\in N_{G}[v]$ is a \emph{maximum neighbor} of $v$ in $G$ if
$N_{G}[w]\subseteq N_{G}[u]$ for all $w \in N_{G}[v]$. A vertex $v$ in $G$ is
called \emph{doubly simplicial} if it is a simplicial vertex and it has a
maximum neighbor in $G$. An ordering
$\sigma=(v_{1},v_{2},....,v_{n})$ of $V$ is a \emph{doubly perfect
elimination ordering (DPEO)} if $v_{i}$ is a doubly simplicial vertex
in the induced subgraph $G[\{v_{i},v_{i+1},\ldots,v_{n}\}]$ for each
$i$, $1\leq i \leq n$. A graph is \emph{doubly chordal} if it admits a
doubly perfect elimination ordering (DPEO) \cite{dual}.

%

In this paper, we only consider simple connected graphs with
at least two vertices unless otherwise mentioned specifically.

We have the following straightforward observation for any restrained dominating set of a graph.
\begin{obs}
Let $G$ be a graph and $D$ be any restrained dominating set of $G$. If $P$ denotes the set of all pendant vertices in $G$, then $P\subseteq D$.
\end{obs}
\section{Restrained domination in doubly chordal graphs}
\label{sec:3} To show that the \textsc{Restrained Domination Decision} problem  is NP-complete, we need to use a well known NP-complete problem, called Exact Cover by 3-Sets (X3C) \cite{np}, which is defined as follows:

\noindent\textbf{Exact Cover By 3-Sets (X3C)} \\
\textbf{INSTANCE:} A finite set $X$ with $|X|=3q$ and a collection $\mathcal{C}$ of 3-element subsets of $X$.\\
\textbf{QUESTION:} Does $\mathcal{C}$ contain an exact cover for
$X$, that is, a subcollection $\mathcal{C'} \subseteq \mathcal{C}$
such that every element in $X$ occurs in exactly one member of
$\mathcal{C'} ?$

\begin{theorem}
The RDD problem is NP-Complete for doubly chordal graphs.
\end{theorem}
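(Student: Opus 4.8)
\medskip
\noindent\textit{Proof strategy.}\ The plan is to first observe that the RDD problem is in NP (given $D\subseteq V$, checking that $|D|\le k$ and that every vertex outside $D$ has a neighbour in $D$ and a neighbour outside $D$ takes polynomial time), and then to give a polynomial reduction from X3C whose output graph is doubly chordal.

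Given an X3C instance with $X=\{a_1,\dots,a_{3q}\}$ and $\mathcal{C}=\{C_1,\dots,C_m\}$, I would build $G$ as follows. For each set $C_j$ take a vertex $c_j$ and make $K=\{c_1,\dots,c_m\}$ a clique; for each element $a_i$ take a vertex $x_i$, joined to $c_j$ exactly when $a_i\in C_j$ (so $I=\{x_1,\dots,x_{3q}\}$ is independent); add a vertex $z$ adjacent to every vertex of $K\cup I$; and, for $\ell=1,\dots,q$, attach a pendant path $z\,p_\ell\,t_\ell$ (edges $zp_\ell$ and $p_\ell t_\ell$, so each $t_\ell$ is a pendant vertex and $N_G(p_\ell)=\{z,t_\ell\}$). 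I would set $k=2q$; the construction is clearly polynomial. The forward direction is then easy: from an exact cover $\mathcal{C}'$ one takes $D=\{t_1,\dots,t_q\}\cup\{c_j:C_j\in\mathcal{C}'\}$, which has size $2q=k$ and is restrained dominating, since each $x_i$ is dominated by its unique covering $c_j\in\mathcal{C}'$ and has $z$ as an external neighbour, each $c_j\notin D$ is dominated by a member of $\mathcal{C}'$ and has an element of $C_j$ outside $D$, $z$ has a neighbour in $\mathcal{C}'$ and $p_1$ outside $D$, and each $p_\ell\notin D$ is dominated by $t_\ell$ and has $z$ outside $D$.

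For the converse, given a restrained dominating set $D$ with $|D|\le 2q$, I would first show $z\notin D$: otherwise all pendants $t_\ell\in D$ (by the observation that pendant vertices lie in every restrained dominating set), and then every $p_\ell$ has both of its neighbours $z,t_\ell$ in $D$, so $p_\ell$ itself is forced into $D$, giving $|D|\ge 2q+1$, a contradiction. Since each $t_\ell$ still lies in $D$, the set $D^\ast:=D\setminus\{t_1,\dots,t_q\}$ has at most $q$ vertices, none of which is $z$ or a $t_\ell$. As $N_G(x_i)=\{z\}\cup\{c_j:a_i\in C_j\}$ and $z\notin D$, every $x_i$ must be dominated by a vertex of $D^\ast$ lying in $I\cup K$; writing $a=|D^\ast\cap K|$ and $b=|D^\ast\cap I|$, these dominate at most $3a+b$ elements of $X$, so $3a+b\ge 3q$ while $a+b\le q$, which forces $a=q$, $b=0$, and forces the $q$ corresponding three-element sets to cover all of $X$, hence to be pairwise disjoint, i.e.\ an exact cover.

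Finally I would verify that $G$ is doubly chordal by exhibiting a DPEO: delete $t_1,\dots,t_q$ first (each is simplicial with maximum neighbour $p_\ell$), then $p_1,\dots,p_q$ (after the $t_\ell$'s are gone each has the single neighbour $z$, hence is simplicial with maximum neighbour $z$), then $x_1,\dots,x_{3q}$ (at this stage $N[x_i]=\{x_i\}\cup\{c_j:a_i\in C_j\}\cup\{z\}$ is a clique and $z$ is universal, so $z$ is a maximum neighbour of $x_i$), and finally $K\cup\{z\}$ in any order (they induce a clique). The step I expect to be the real obstacle — and the reason the gadget is shaped this way — is reconciling two opposing pressures: making $G$ doubly chordal essentially forces a vertex like $z$ that dominates the whole core $K\cup I$ and so acts as a maximum neighbour of the element vertices in the elimination ordering, yet such a vertex, if admitted into a cheap solution, would make RDD trivial. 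The pendant paths $z\,p_\ell\,t_\ell$ are included precisely so that using $z$ costs $2q+1>k$ vertices; tuning the number of these gadgets and the threshold $k$ so that the covering count in the converse direction is exactly tight is the delicate part.
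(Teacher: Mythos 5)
Your proposal is correct and is essentially the paper's own proof: the graph you build (clique on the set-vertices, independent set of element-vertices, a universal-to-the-core vertex $z$, and $q$ pendant paths hung off $z$) is exactly the paper's construction with $z,p_\ell,t_\ell$ playing the roles of its $r,w_\ell,z_\ell$, with the same threshold $k=2q$, the same forward solution, the same argument that $z\notin D$ via the forced pendant-path vertices, and the same counting argument ($3a+b\ge 3q$, $a+b\le q$) in the converse. Your DPEO eliminates the pendant-path vertices first rather than last, but both orderings are valid, so there is nothing substantively different to compare.
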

\begin{proof} Clearly, the  RDD problem is in NP.
 To show that it is NP-complete, we establish a polynomial time reduction from Exact
 Cover by 3-Sets (X3C). Let $X=\{x_{1},x_{2},....,x_{3q}\}$
 and $\mathcal{C}=\{C_{1},C_{2},....,C_{m}\}$ be an arbitrary instance of $X3C$.

 We construct the graph $G=(V,E)$ and a positive integer $k$, as in instance of the RDD problem in the following way:

 $V=\{x_{1},x_{2},\ldots,x_{3q}\} \cup \{c_{1},c_{2},\ldots,c_{m}\} \cup
\{w_{1},w_{2},\ldots,w_{q}\}\cup \{z_{1},z_{2},\ldots,z_{q}\} \cup \{r\}$,

 $E=\{x_{i}c_{j}| x_{i} \in
C_{j}, 1 \leq i \leq 3q, 1\leq j \leq m\} \cup \{c_{i}c_{j}| 1 \leq
i < j \leq m\} \cup \{rx_{i}| 1 \leq i \leq 3q\} \cup \{rc_{i}| 1
\leq i \leq m\}\cup \{rw_{i}| 1 \leq i \leq q\} \cup \{w_{i}z_{i}|
1 \leq i \leq q\}$, and $k = 2q$.

 The graph $G$ is a doubly chordal graph as  $(x_{1},...,x_{3q},c_{1},...,c_{m},z_{1},z_{2},\ldots,z_{q},w_{1},w_{2},\ldots,w_{q},r)$ is a DPEO of $G$. The construction of the graph $G=(V,E)$ associated with an instance
of $X3C$, where $X=\{x_{1},x_{2},...,x_{6}\}$ and
$\mathcal{C}=\{C_{1}=(x_{1},x_{4},x_{6}),C_{2}=(x_{1},x_{2},x_{5}),C_{3}=(x_{2},x_{3},x_{5}), C_{4}=(x_{2},x_{4},x_{6}),C_{5}=(x_{3},x_{5},x_{6})\}$
is shown in figure~\ref{fig:1}.

\begin{figure}[h!]
 \begin{center} \includegraphics[width=6cm, height=8.5cm]{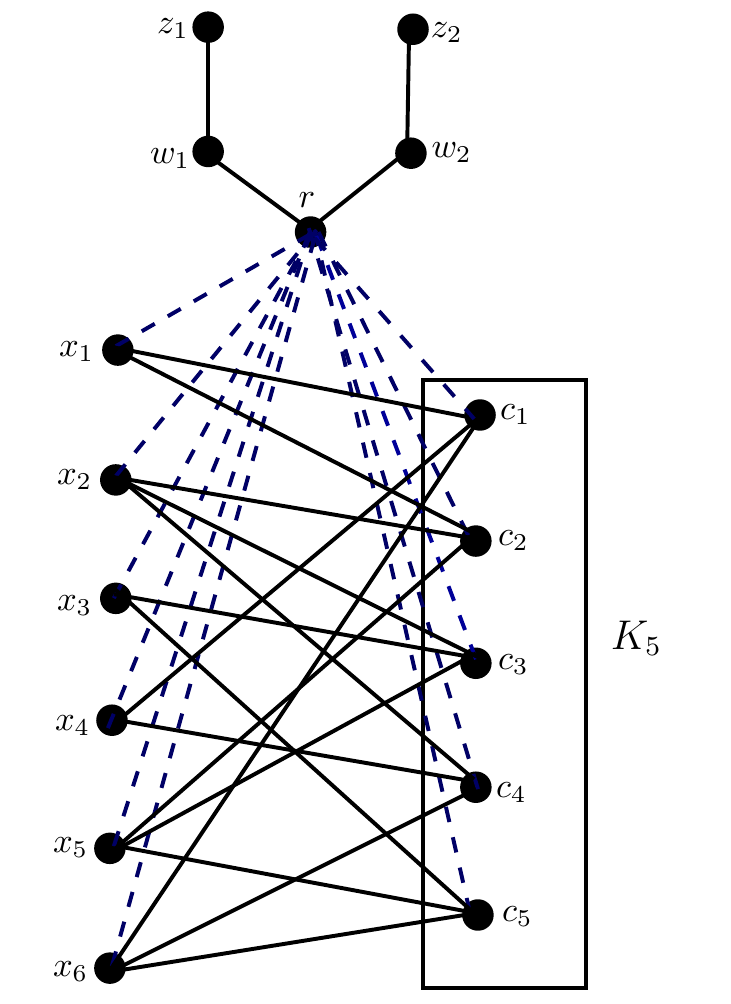}
\end{center} \caption{An Illustration to the construction of doubly chordal graph}
  \label{fig:1}
\end{figure}

Now we show that $X$ has an exact cover $\mathcal{C'}\subseteq \mathcal{C}$ if and
only if G has a restrained dominating set of cardinality at most $k$.

Suppose that $X$ has an exact cover $\mathcal{C'}$.
Then $\{c_{j}| C_{j}\in \mathcal{C'}\} \cup \{z_{1},z_{2},\ldots,z_{q}\}$ is a restrained dominating set of cardinality $2q.$

Conversely, suppose that $D$ is a restrained dominating set of $G$ of
cardinality at most $2q$. Then by Observation $2.1$, all the pendant vertices of $G$ must belong to $D$. Hence $\{z_{1},z_{2},\ldots,z_{q}\}\subseteq D$. Next, we show that $r\notin D$.
Since if $r\in D$, then the set $\{w_{1},w_{2},\ldots,w_{q}\}\subseteq D$, and the cardinality of the set $D$ must be at least $2q+1$, which is not true. Therefore $r\notin D$.

Now define $D'=D\setminus (\{z_{1},z_{2},\ldots,z_{q}\})$, and $X=\{x_{1},x_{2},\ldots,x_{3q}\}$. Then $|D'|\leq q$, and all the vertices of $X$ are dominated by $D'$. Since $N_{G}[X]=X\cup \{c_{1},c_{2},\ldots,c_{m}\}$, all the $3q$ vertices of $X$ are dominated by at most $q$ vertices of $N_{G}[X]$. If for some $i$, $1\leq i \leq n$, $x_{i}\in D$, then it dominates only a single vertex of $X$ ($x_{i}$ itself), and if for some $j$, $1\leq j\leq m$, $c_{j}\in D$, then it dominates $3$ vertices of $X$. But to dominate all the $3q$ vertices of $X$ by at most $q$ vertices of $D'$, each vertex in $D'$ must dominate at least $3$ vertices of $X$. Hence $X\cap D'=\emptyset$ and $|D'\cap \{c_{1},c_{2},\ldots,c_{m}\}|=q$. This implies that $C'=\{C_{j}\mid c_{j}\in D'\}$ is an exact cover of $\mathcal{C}$.

%

Hence, the RDD problem is NP-complete for doubly chordal graphs.
\end{proof}
\section{Complexity difference in domination and restrained domination}
\label{sec:4}
In this section, we construct a class of graphs, for which the \textsc{Minimum Restrained Domination} problem is easily solvable, but the decision version of the domination problem is NP-complete.
\begin{definition}[GP graph]
 A graph $G=(V_{G},E_{G})$ is said to be GP graph if it can be constructed from a general graph $H=(V_{H},E_{H})$, where $V_{H}=\{v_{1},v_{2},\ldots,v_{n}\}$ in the following way: for each vertex $v_{i}$ of $H$, add a path $v_{i},x_{i},y_{i},z_{i}$ of length $3$.

 Formally, $V_{G}=V_{H}\cup \{x_{i},y_{i},z_{i}\mid 1\leq i \leq n\}$ and $E_{G}=E_{H}\cup \{v_{i}x_{i},x_{i}y_{i},y_{i}z_{i}\mid 1\leq i \leq n\}$.
\end{definition}
\begin{theorem}\label{th:gp}
Let G be a GP graph constructed from a general graph $H=(V_{H},E_{H})$, where $V_{H}=\{v_{1},v_{2},\ldots,v_{n}\}$,
 by taking a path $v_{i},x_{i},y_{i},z_{i}$ of length $3$, corresponding to each vertex $v_{i}\in V_{H}$. Then
$\gamma_{r}(G)=2n$ and $V_{H}\cup \{z_{i}\mid 1\leq i \leq n\}$ is a restrained dominating set of G.
\end{theorem}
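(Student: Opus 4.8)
The plan is to establish the equality $\gamma_r(G) = 2n$ by proving the two inequalities separately, with the upper bound coming from exhibiting an explicit restrained dominating set and the lower bound coming from a local counting argument on each attached path.

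For the upper bound, I would verify directly that $S = V_H \cup \{z_i \mid 1 \le i \le n\}$ is a restrained dominating set of $G$ of size $2n$. Domination: each $v_i \in S$; each $x_i$ is adjacent to $v_i \in S$; each $y_i$ is adjacent to $z_i \in S$; each $z_i \in S$. So $S$ dominates $V_G$. Restrained condition: the vertices outside $S$ are exactly the $x_i$ and $y_i$; I need each of them adjacent to another vertex of $V_G \setminus S$. But $x_i$ is adjacent to $y_i$ and $y_i$ is adjacent to $x_i$, and both lie outside $S$, so the restrained condition holds. Hence $\gamma_r(G) \le 2n$.

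For the lower bound, fix any restrained dominating set $D$ of $G$. The key is that $z_i$ is a pendant vertex of $G$ (its only neighbor is $y_i$), so by Observation~2.1 we have $z_i \in D$ for every $i$. Now I claim that for each $i$, $D$ must contain at least one of $\{v_i, x_i, y_i\}$ in addition to $z_i$; this gives $|D| \ge 2n$ since the sets $\{v_i, x_i, y_i, z_i\}$ are pairwise disjoint. Suppose to the contrary that $v_i, x_i, y_i \notin D$ for some $i$. Then $x_i \notin D$, so $x_i$ must be dominated by a neighbor in $D$; its neighbors are $v_i$ and $y_i$, neither of which is in $D$ — contradiction. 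This proves $\gamma_r(G) \ge 2n$, and combined with the upper bound we get $\gamma_r(G) = 2n$.

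The argument is essentially routine; the only subtlety worth being careful about is the restrained condition for the upper bound (making sure the vertices left outside $S$ each still have an outside neighbor — which works precisely because $x_i$ and $y_i$ are left outside together and are adjacent), and invoking the pendant-vertex observation correctly for the $z_i$'s in the lower bound. No genuine obstacle is anticipated, since the path gadget is short enough that a direct case analysis on $\{v_i, x_i, y_i, z_i\}$ suffices.
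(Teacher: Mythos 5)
Your proposal is correct and follows essentially the same route as the paper: the upper bound by exhibiting $V_H \cup \{z_i \mid 1 \le i \le n\}$ as a restrained dominating set, and the lower bound by forcing all pendant vertices $z_i$ into $D$ and then observing that dominating $x_i$ requires a vertex of $\{v_i, x_i, y_i\}$. Your write-up merely spells out the verification of the restrained condition for the upper bound in more detail than the paper does.
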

\begin{proof}
It is easy to observe that $V_{H}\cup\{z_{i}\mid 1\leq i \leq n\}$ is a restrained dominating set of $G$. Hence $\gamma_{r}(G)\leq 2n$.

Now consider a restrained dominating set, say $D_{r}$ of $G$. Then $D_{r}$ must contain all the pendant vertices of $G$. Hence $\{z_{i}\mid 1\leq i \leq n\}\subseteq D_{r}$. Now, to dominate $x_{i}$, at least one vertex from the set $\{v_{i},x_{i},y_{i}\}$ must belong to $D_{r}$, for each $i$, $1\leq i \leq n$.  This implies that $|D_{r}|\geq 2n$ and this completes the proof of the theorem.
\end{proof}

The following theorem directly follows from the Theorem~\ref{th:gp}.
\begin{theorem}
A minimum dominating set of a GP graph can be computed in linear time.
\end{theorem}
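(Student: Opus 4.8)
The plan is to reduce the computation of a minimum dominating set of $G$ to a local analysis of the $n$ pendant paths, reusing the structure exploited in Theorem~\ref{th:gp}. For each gadget $v_i,x_i,y_i,z_i$, the pendant vertex $z_i$ forces $\{y_i,z_i\}\cap D\neq\emptyset$ for every dominating set $D$, and placing the single vertex $y_i$ already dominates all three new vertices $x_i,y_i,z_i$ since $N_G[y_i]=\{x_i,y_i,z_i\}$. First I would make this local analysis precise: among all ways of covering $\{x_i,y_i,z_i\}$ using gadget vertices, taking $y_i$ is cheapest (cost one), and the only reason to spend a second gadget vertex is to also cover the attachment vertex $v_i$ (e.g. by adding $x_i$, since $N_G[x_i]=\{v_i,x_i,y_i\}$). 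This isolates each gadget's contribution and leaves only the vertices of $V_H$ to be dominated.

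Next I would establish a closed-form value. The upper bound is constructive in linear time: $\{y_i\mid 1\le i\le n\}$ together with a dominating set of the core graph $H$ is a dominating set of $G$, so $\gamma(G)\le n+\gamma(H)$. For the matching lower bound I would use a charging argument on a minimum dominating set $D$. Writing $t_i=|D\cap\{x_i,y_i,z_i\}|$ and $D_v=D\cap V_H$, every gadget satisfies $t_i\ge 1$, and a gadget whose attachment vertex $v_i$ is not dominated by $D_v$ inside $H$ must additionally contain $x_i$, forcing $t_i\ge 2$. Letting $B=\{i\mid N_H[v_i]\cap D_v=\emptyset\}$, the set $D_v\cup\{v_i\mid i\in B\}$ is a dominating set of $H$ of size $|D_v|+|B|\le |D|-n$, which yields $\gamma(G)\ge n+\gamma(H)$ and hence $\gamma(G)=n+\gamma(H)$.

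The main obstacle is precisely this final step. The identity $\gamma(G)=n+\gamma(H)$ shows that the cost of a minimum dominating set of the GP graph is governed by $\gamma(H)$ for the \emph{arbitrary} core graph $H$, and computing $\gamma(H)$ is the \textsc{Minimum Domination} problem on a general graph, which is NP-hard. Consequently the gadget reduction cannot be carried out in polynomial, let alone linear, time for all GP graphs; instead the same map $H\mapsto G$ is a polynomial reduction showing that the \textsc{Domination Decision} problem is NP-complete on GP graphs. This is exactly the contrast advertised at the start of the section (restrained domination easy, ordinary domination hard), so I expect the statement as worded to require correction: the linear-time guarantee genuinely holds for a minimum \emph{restrained} dominating set, where Theorem~\ref{th:gp} already exhibits the optimal solution $V_H\cup\{z_i\mid 1\le i\le n\}$ explicitly, while for ordinary domination the correct companion statement is NP-completeness. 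I would therefore prove either the one-line restrained consequence of Theorem~\ref{th:gp} or the NP-completeness claim via $\gamma(G)=n+\gamma(H)$, rather than the verbatim statement.
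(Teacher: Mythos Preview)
Your analysis is correct, and in fact it diagnoses exactly what is going on in the paper. The paper's entire ``proof'' of this theorem is the sentence ``The following theorem directly follows from the Theorem~\ref{th:gp}.'' Since Theorem~\ref{th:gp} concerns only the \emph{restrained} domination number and exhibits the explicit optimal restrained dominating set $V_H\cup\{z_i\mid 1\le i\le n\}$, the intended statement is evidently about a minimum \emph{restrained} dominating set, and the word ``restrained'' has simply been dropped by a typo.

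Your argument that the verbatim statement cannot hold is also correct and matches the paper's own logic a few lines later: the identity $\gamma(G)=n+\gamma(H)$ that you derive is precisely the content of Lemma~\ref{prop}, and the paper uses it together with Theorem~\ref{domtheorem} to conclude that \textsc{Domination Decision} is NP-complete for GP graphs. So the paper itself proves that ordinary domination on GP graphs is NP-hard, which is incompatible with the literal reading of the theorem you were asked to prove. Your proposed resolution---prove the one-line restrained consequence of Theorem~\ref{th:gp}, and separately record NP-completeness of ordinary domination via the reduction---is exactly what the paper does once the typo is accounted for.
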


\begin{lemma}\label{prop}
Let G be a GP graph constructed from a general graph $H=(V_{H},E_H)$, where $V_{H}=\{v_{1},v_{2},\ldots,v_{n}\}$,
 by taking a path $v_{i},x_{i},y_{i},z_{i}$ of length $3$, corresponding to each vertex $v_{i}\in V_{H}$. Then
$H$ has a dominating set of cardinality at most $k$ if and only if $G$ has a dominating set of cardinality at most $n+k$.
\end{lemma}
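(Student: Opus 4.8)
The plan is to prove both implications by explicit construction, exploiting the fact that each attached path $v_i,x_i,y_i,z_i$ forces rigid local behaviour on any dominating set.

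For the ``only if'' direction, suppose $D$ is a dominating set of $H$ with $|D|\le k$. I would take $D^\ast = D \cup \{y_i \mid 1\le i\le n\}$. Each $y_i$ dominates $x_i$, $y_i$ and $z_i$, so all path vertices are dominated; every vertex $v_j\in V_H$ is dominated inside $H$ by $D$, hence also in $G$. Thus $D^\ast$ is a dominating set of $G$ with $|D^\ast|\le n+k$.

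For the ``if'' direction, let $D'$ be a dominating set of $G$ with $|D'|\le n+k$. The first key observation is that, for each $i$, to dominate the pendant vertex $z_i$ the set $D'$ must contain $y_i$ or $z_i$; since the pairs $\{y_i,z_i\}$ are pairwise disjoint, $D'$ already spends at least $n$ vertices on these. The second step is a normalization of $D'$ that never increases its size: whenever $z_i\in D'$ but $y_i\notin D'$, replace $z_i$ by $y_i$ (this still dominates $x_i,y_i,z_i$); then, since now $y_i\in D'$, delete $z_i$ if it is still present. After processing all $i$ we obtain a dominating set $D_1$ of $G$ with $|D_1|\le n+k$, with $y_i\in D_1$ and $z_i\notin D_1$ for every $i$. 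Hence $D_1 = \{y_i\mid 1\le i\le n\}\,\sqcup\, S$ with $S\subseteq V_H\cup\{x_i\mid 1\le i\le n\}$ and $|S|\le k$. The third step replaces, for each $i$ with $x_i\in S$, the vertex $x_i$ by $v_i$; call the resulting set $S''\subseteq V_H$, and note $|S''|\le |S|\le k$ (collisions only help). Finally I would check that $S''$ dominates $H$: given $v_j\in V_H$, its dominator in $D_1$ lies in $N_G[v_j]=N_H[v_j]\cup\{x_j\}$ and cannot be any $y_i$, so it lies in $S$; if it is $x_j$ then $v_j\in S''$ and is dominated by itself, and if it lies in $N_H[v_j]$ (hence in $V_H\cap S\subseteq S''$) it still dominates $v_j$ in $H$. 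Therefore $S''$ is a dominating set of $H$ with $|S''|\le k$.

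The only delicate part is the bookkeeping in the reverse direction: one must argue that the two normalization moves (the $z_i\to y_i$ swap and the $x_i\to v_i$ swap) never increase the cardinality, and that after the $x_i\to v_i$ swap every $v_j$ is still dominated, which requires tracking exactly which element of $D'$ was responsible for dominating each $v_j$. This is a routine but slightly tedious case analysis with no real combinatorial obstacle.
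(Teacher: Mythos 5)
Your proposal is correct and follows essentially the same route as the paper: the forward direction adds $\{y_i \mid 1\le i\le n\}$, and the reverse direction observes that dominating each pendant $z_i$ forces at least $n$ vertices into the disjoint pairs $\{y_i,z_i\}$, strips those out, replaces each remaining $x_i$ by $v_i$, and checks that what is left dominates $H$. Your intermediate normalization ($z_i\to y_i$ swaps) is a slightly more explicit version of the paper's step of simply deleting all $y_i$'s and $z_i$'s at once, but the argument is the same.
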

\begin{proof}
Let $D'$ be a dominating set of $H$ of cardinality $k$. Then $D'\cup\{y_{i}\mid 1\leq i \leq n\}$ is a dominating set of $G$ of cardinality $n+k$.

Conversely, suppose that $D$ is a dominating set of $G$ of cardinality $n+k$. Then, either the pendant vertex $z_{i}$ or the vertex adjacent to pendant vertex, that is, $y_{i}$ must belong to $D$. Define $D'=D\setminus(\{y_{1},y_{2},\ldots,y_{n}\}\cup \{z_{1},z_{2},\ldots,z_{n}\})$. Then $|D'|\leq |D|-n$. Now, for each $i$, $1\leq i \leq n$, if $x_{i}\in D'$, we update $D'$ as $D'=(D'\setminus\{x_{i}\})\cup \{v_{i}\}$. Clearly $D'$ is a dominating set of $H$ and $|D'|\leq |D|-n$. Hence $D'$ is a dominating set of $H$ of cardinality at most $k$.
\end{proof}

We already have the following result for the decision version of the domination problem.
\begin{theorem}\label{domtheorem}
\cite{np} The \textsc{Domination Decision} problem is NP-complete for general graphs.
\end{theorem}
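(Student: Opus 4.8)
The plan is the standard two-part argument for NP-completeness. Membership in NP is immediate: given a subset $D\subseteq V$, one verifies in polynomial time that $|D|\le k$ and that every vertex of $V\setminus D$ has a neighbour in $D$. For hardness I would give a polynomial-time reduction; the most economical choice within the toolkit of this paper is to reduce from Exact Cover by 3-Sets (X3C), which was introduced above and is known to be NP-complete \cite{np}. (Alternatively one may use the classical reduction from \textsc{Vertex Cover}: given a graph $H$ without isolated vertices, replace each edge $e=uv$ by a triangle on $u$, $v$, and a fresh vertex $w_e$; then $H$ has a vertex cover of size at most $k$ if and only if the resulting graph has a dominating set of size at most $k$.)

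For the X3C reduction, let $(X,\mathcal{C})$ with $X=\{x_1,\dots,x_{3q}\}$ and $\mathcal{C}=\{C_1,\dots,C_m\}$ be an instance. After discarding, as an automatic no-instance, the degenerate case in which some element of $X$ lies in no member of $\mathcal{C}$, construct a graph $G$ with vertex set $\{u_1,\dots,u_{3q}\}\cup\{s_1,\dots,s_m\}$, where $u_i$ represents $x_i$ and $s_j$ represents $C_j$; join $u_i$ to $s_j$ whenever $x_i\in C_j$, and make $\{s_1,\dots,s_m\}$ a clique. Set $k=q$. This construction is clearly polynomial in the size of the instance.

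It then remains to show that $G$ has a dominating set of size at most $q$ if and only if $\mathcal{C}$ contains an exact cover of $X$. For the forward direction, if $\mathcal{C}'\subseteq\mathcal{C}$ is an exact cover (necessarily $|\mathcal{C}'|=q$ since $|X|=3q$), then $\{s_j : C_j\in\mathcal{C}'\}$ dominates every $s_j$ through the clique and every $u_i$ because each $x_i$ lies in some member of $\mathcal{C}'$. Conversely, suppose $D$ is a dominating set of $G$ with $|D|\le q$. The key step is a ``pushing'' argument: whenever $u_i\in D$, replace it by an arbitrary $s_j$ with $x_i\in C_j$; since all neighbours of $u_i$ are set-vertices and $s_j$ is adjacent to every set-vertex as well as to $u_i$, this preserves the domination property and does not increase $|D|$. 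After all such replacements we may assume $D\subseteq\{s_1,\dots,s_m\}$ with $|D|\le q$ (and $D\neq\emptyset$ in the nontrivial case). Each $u_i$ is then dominated only if $x_i\in C_j$ for some $s_j\in D$, so $\mathcal{C}'=\{C_j : s_j\in D\}$ covers $X$ with $|\mathcal{C}'|\le q$; since a cover of a $3q$-element set by at most $q$ three-element sets must be disjoint, $\mathcal{C}'$ is an exact cover.

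The only real obstacle lies in the backward direction: one must argue that a dominating set can be taken to avoid the element-vertices (the pushing step), and dispose cleanly of the degenerate instances in which an element of $X$ belongs to no set, so that no valid push target exists. Everything else is bookkeeping. Since X3C is NP-complete, this establishes that the \textsc{Domination Decision} problem is NP-complete for general graphs.
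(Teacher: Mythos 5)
Your argument is correct, but it is worth pointing out that the paper does not prove this statement at all: Theorem~\ref{domtheorem} is quoted as a known result and discharged entirely by the citation to Garey and Johnson \cite{np}, exactly as the X3C problem itself is earlier in the paper. What you have done is supply the classical hardness proof from scratch. Your X3C reduction is sound: membership in NP is routine, the degenerate instances (an element lying in no set) are correctly shunted to a fixed no-instance, the forward direction is immediate, and the pushing step in the backward direction works because an element-vertex $u_i$ dominates only itself and set-vertices, all of which remain dominated once $u_i$ is traded for an adjacent set-vertex $s_j$ (which covers the whole clique and $u_i$ itself); the final counting observation that $q$ three-element sets covering $3q$ elements must be pairwise disjoint closes the argument. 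Your parenthetical alternative via \textsc{Vertex Cover} with a pendant triangle per edge is the reduction Garey and Johnson actually record. The only thing your write-up buys over the paper's treatment is self-containment; the only thing it costs is length, since for the purposes of Section~4 the theorem is purely an imported black box feeding Lemma~\ref{prop}. If you wanted to integrate your proof into this paper, the natural move would be to reuse the clique-on-set-vertices gadget already built in Section~3, which is essentially your construction with the extra vertices $r$, $w_i$, $z_i$ stripped away.
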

The following theorem directly follows from the Lemma~\ref{prop} and Theorem~\ref{domtheorem}.
\begin{theorem}
The \textsc{Domination Decision} problem is NP-complete for GP graphs.
\end{theorem}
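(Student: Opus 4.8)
The plan is to package Lemma~\ref{prop} and Theorem~\ref{domtheorem} into a standard polynomial-time many-one reduction, so the proof is essentially bookkeeping. First I would observe that the \textsc{Domination Decision} problem restricted to GP graphs lies in NP: a dominating set is a certificate of size at most $|V_G|$, and verifying that a given vertex subset dominates $G$ and has cardinality at most the prescribed bound can be done in polynomial time.

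Next, for NP-hardness I would reduce from the \textsc{Domination Decision} problem on general graphs, which is NP-complete by Theorem~\ref{domtheorem}. Given an arbitrary instance $(H,k)$ of that problem with $V_H=\{v_1,v_2,\ldots,v_n\}$, I construct the GP graph $G$ associated with $H$ exactly as in the definition of GP graph — attach to each $v_i$ a pendant path $v_i,x_i,y_i,z_i$ — and set the new bound to $k'=n+k$. This construction adds precisely $3n$ vertices and $3n$ edges, hence is computable in time polynomial in the size of $H$, and the resulting graph is by construction a GP graph.

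Correctness of the reduction is then immediate from Lemma~\ref{prop}: $H$ has a dominating set of cardinality at most $k$ if and only if $G$ has a dominating set of cardinality at most $n+k=k'$. Therefore $(H,k)$ is a yes-instance of \textsc{Domination Decision} on general graphs if and only if $(G,k')$ is a yes-instance of \textsc{Domination Decision} on GP graphs, and combined with membership in NP this shows the problem is NP-complete for GP graphs.

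I do not anticipate any genuine obstacle: all the real work is already contained in Lemma~\ref{prop}, and the remaining points — that the construction runs in polynomial time and that it preserves yes/no answers — are routine. If anything, the only thing worth stating carefully is that the bound transforms as $k\mapsto n+k$ rather than being preserved, so that the reduction is a valid Karp reduction rather than merely a Turing reduction.
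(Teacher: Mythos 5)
Your proposal is correct and matches the paper's intent exactly: the paper states that the theorem ``directly follows'' from Lemma~\ref{prop} and Theorem~\ref{domtheorem}, and your write-up simply makes explicit the standard Karp reduction (membership in NP, polynomial-time construction of the GP graph, and the bound transformation $k\mapsto n+k$) that the paper leaves implicit. No gaps.
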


\section{Restrained domination in block graphs}
\label{sec:5} \label{sec:5} A vertex $v \in V$ of a graph $G=(V,E)$
is called a \emph{cut vertex} of $G$ if $G\setminus\{v\}$, the
subgraph of $G$ obtained after removing the vertex $v$ and all the
edges incident on $v$, becomes disconnected. A maximal connected
induced subgraph with no cut vertex is called a \emph{block} of $G$.
The intersection of two blocks contains at most one vertex. A vertex
belongs to the intersection of two or more blocks if and only if it
is a cut-vertex of the graph.  A graph $G$ is called a {\em block graph}
if all the blocks of $G$ are complete graphs. A block graph can be
represented by a tree like decomposition structure, called
\emph{cut-tree}. The cut-tree, denoted by $T_{CG}(V^{CG},E^{CG})$, of a
block graph $G(V,E)$ with $k$ blocks $BC_{1}, BC_{2},\ldots, BC_{k}$
and $l$ cut vertices $v_{1},v_{2},\ldots,v_{l}$ is defined in the
following way:

$V^{CG}=\{BC_{1}, BC_{2},\ldots, BC_{k},v_{1},v_{2},\ldots,v_{l}\}$, \\
 and $E^{CG}=\{(BC_{i},v_{j})|v_{j}\in V(BC_{i}), 1\leq i \leq k, 1\leq j \leq l\}$.

The cut-tree of a block graph $G$ can be constructed in linear time
using depth-first search method. For any block $BC_{i}$,
define $B_{i}=\{v\in V(BC_{i})| v$ is not a cut vertex$\}.$ Now we can
refine the cut-tree $T_{CG}=(V^{CG},E^{CG})$ as
$T_{G}=(V^{G},E^{G})$, where $V^{G}=\{(1,B_{1}), (2,B_{2}),\ldots,
(k,B_{k}),v_{1},v_{2},\ldots,v_{l}\}$ and
$E^{G}=\{((i,B_{i}),v_{j})|v_{j}\in V(BC_{i}), 1\leq i \leq k, 1\leq j
\leq l\}$. Each $(i,B_{i})$ is called a block-vertex. Note that one
or more $B_{i}$ may be empty. So we have used $(i,B_i)$ instead of
$B_i$. However, in the rest of the paper we will use $B_i$ for
$(i,B_i)$ unless otherwise mentioned explicitly. Note that every vertex in the refined cut-tree is
 either a cut vertex or block vertex. A block graph $G$
and the corresponding refined cut-tree are shown in
Fig.~\ref{fig:2}. We consider the refined cut-tree
$T_{G}=(V^{G},E^{G})$ of the block graph $G$, as input of our
problem. Now, first we prove the following lemma.

\begin{figure}[h!]
 \begin{center} \includegraphics[width=12cm, height=5cm]{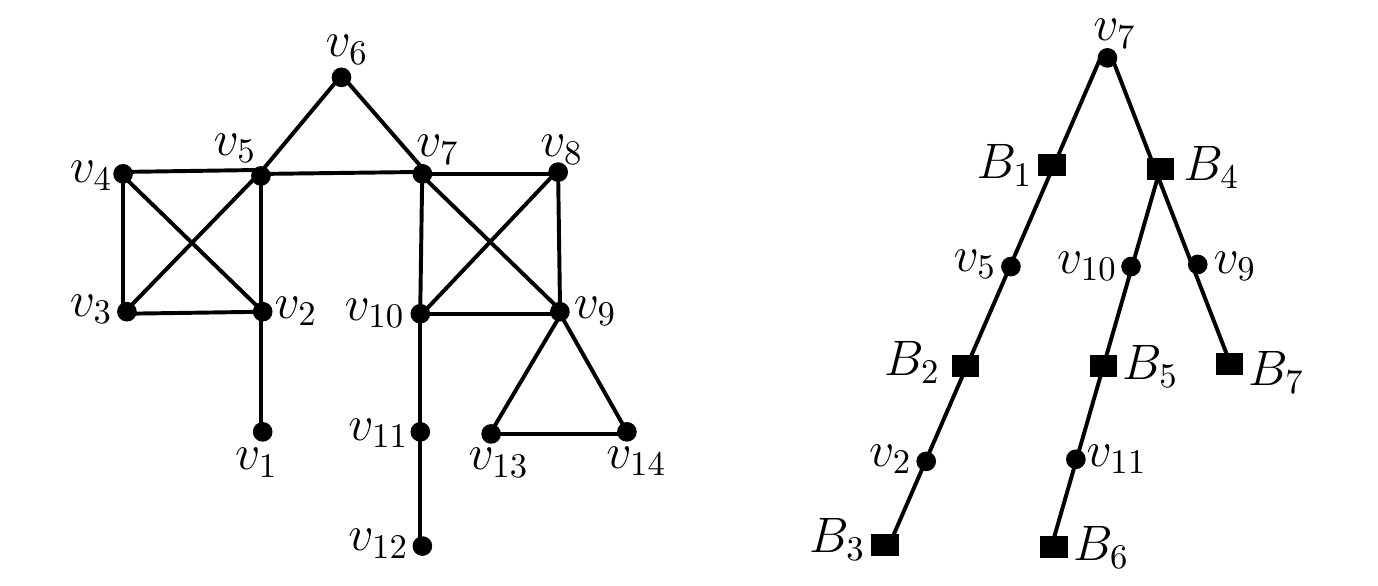}
\end{center} \caption{A block graph $G$ and the corresponding refined cut-tree}
  \label{fig:2}
\end{figure}

\begin{lemma}
Let $G=(V,E)$ be a block graph with at least three vertices. Then,
every minimum restrained dominating set  of $G$ contains at most one
vertex from each $B_{i}$.
\end{lemma}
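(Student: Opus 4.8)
The plan is to argue by contradiction: suppose $D$ is a minimum restrained dominating set of $G$ that contains at least two vertices, say $u$ and $v$, from some block $B_i$. Since $B_i$ consists of non-cut vertices of the block $BC_i$, every vertex of $B_i$ has all of its neighbors inside $V(BC_i)$, and $V(BC_i)$ is a clique. I would then try to delete one of these two vertices, say $v$, from $D$ and show that $D' = D \setminus \{v\}$ is still a restrained dominating set, contradicting minimality of $D$.

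The verification that $D'$ is still a restrained dominating set breaks into checking the two defining conditions for vertices now possibly in $V \setminus D'$. First, $D'$ is still a dominating set: the only new vertex to worry about is $v$ itself, and $v$ is dominated by $u \in D'$ since $u,v$ lie in the common clique $V(BC_i)$. Every other vertex $w \notin D'$ was already dominated by some vertex of $D$; if that vertex was $v$, then $w \in V(BC_i)$ (as $v$'s closed neighborhood is contained in the clique $V(BC_i)$), and then $w$ is dominated by $u$ instead. Second, I must check the restrained condition: every vertex not in $D'$ has a neighbor outside $D'$. The vertex $v$, now outside $D'$, needs a neighbor in $V \setminus D'$; and any vertex $w \notin D$ that previously used $v$ as its witness neighbor in $V \setminus D$ now needs a (possibly different) witness in $V \setminus D'$. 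This is where the hypothesis $|V| \geq 3$ and some case analysis on the structure of the block $BC_i$ will be needed.

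The main obstacle, and the place where I expect the real work to be, is precisely the restrained condition, and in particular handling small blocks. If $|V(BC_i)| \geq 3$ then inside the clique $V(BC_i)$ there is always a third vertex $t \neq u,v$; if $t \notin D$ it serves as the outside-neighbor witness for $v$ and for any $w$, and if the clique is entirely or almost entirely inside $D$ one has to chase the witnesses elsewhere. The delicate case is $|V(BC_i)| = 2$: then $B_i$ has at most two vertices and $u, v$ are exactly the two endpoints of an edge-block, with $u$ and $v$ being pendant-like. Here I would argue that having both in $D$ forces $v$'s only neighbor $u$ to be in $D$, so $v \notin D'$ has no neighbor outside $D'$ — but I would show this configuration cannot arise in a \emph{minimum} restrained dominating set when $|V| \geq 3$, because one could then remove $v$ and re-add a neighbor of $u$ lying outside $BC_i$ (which exists since $u$ is then a cut vertex joining $BC_i$ to the rest of the graph, the graph being connected with at least three vertices), keeping the cardinality the same or smaller while fixing the restrained condition. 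Assembling these cases carefully, with the connectivity of $G$ and $|V|\geq 3$ used to guarantee the existence of the needed external vertices, completes the contradiction and hence the lemma.
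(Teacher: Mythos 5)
Your overall strategy (contradiction plus a local modification of $D$) matches the paper's, but the execution has a genuine gap exactly where you flag ``the real work'': the restrained condition in the case $B_i=\{u,v\}$ with $|V(BC_i)|\geq 3$. There, every neighbor of $v$ lies in the clique $V(BC_i)$, so if all the cut vertices of $BC_i$ happen to lie in $D$, then after deleting only $v$ its entire neighborhood sits inside $D\setminus\{v\}$ and there is no witness to ``chase elsewhere'' --- $D\setminus\{v\}$ is genuinely not a restrained dominating set, and no single-vertex deletion can be. The missing idea, which is how the paper resolves this case, is to delete \emph{both} $u$ and $v$: they remain dominated by the cut vertex of $BC_i$ that is in $D$, and being adjacent they serve as each other's witness outside the new set, giving a strictly smaller restrained dominating set and hence the contradiction. (In the complementary subcase where some cut vertex of $BC_i$ is outside $D$, deleting $v$ alone does work, with that cut vertex as witness.) A related slip: in your $|V(BC_i)|\geq 3$ discussion you draw the third vertex $t$ from $V(BC_i)$ and must then worry whether $t\in D$; the paper instead splits on $|B_i|$, and when $|B_i|\geq 3$ takes the third vertex from $B_i$ itself, where it is automatically outside $D$ because $D$ meets $B_i$ only in $\{u,v\}$.

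Two smaller points. First, your ``delicate case'' $|V(BC_i)|=2$ is vacuous: if both endpoints of an edge-block are non-cut vertices then that block is all of the connected graph $G$, contradicting $|V|\geq 3$; your description of $u$ as simultaneously in $B_i$ and a cut vertex is inconsistent. More importantly, the repair you propose there --- remove $v$ and re-add some other vertex, ``keeping the cardinality the same or smaller'' --- cannot prove the lemma as stated even where it applies: the claim is about \emph{every} minimum restrained dominating set, so an exchange that preserves cardinality shows only that \emph{some} minimum set avoids the bad configuration, not that none contains it. You need a strict decrease in cardinality in every case. Second, your concern about vertices that ``previously used $v$ as a witness'' is moot: witnesses are by definition outside $D$, and $v\in D$, so deleting $v$ from $D$ only enlarges $V\setminus D$ and can never invalidate an existing witness; the only new obligations are those of $v$ itself.
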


\begin{proof}
On the contrary,  let $D$ be a minimum cardinality restrained
dominating set containing two vertices, say  $u$ and $v$,  from some
$B_{j}$ (Note that $D$  will not contain more than two vertices from
$B_{j}$; otherwise $(D \setminus B_j)\cup \{u\}$ would have been a
restrained dominating set of $G$ with smaller cardinality than that
of $D$ contradicting the minimality of $D$).

Note that $N_G[u] = N_G[v]$. If $|B_{j}| \geq 3$, then let $x \in B_j
\setminus\{u,v\}$. Now $D \setminus \{u\}$ is a restrained
dominating set of $G$ contradicting the minimality of $D$. Hence
$|B_{j}|=2$. Since $G$ has at least three vertices, block $BC_{j}$
contains one or more cut vertices.  If $D$ contains a cut vertex of
$BC_j$, then define $D'=D\setminus \{u,v\}$ else define $D'=D\setminus \{u\}$. Then
$D'$ is also a restrained dominating set of $G$, and $|D'|<|D|$, which is a contradiction to the minimality of $D$. Hence $D$
contains at most one vertex of $B_{j}$.


This completes the proof of our lemma.

\end{proof}

\subsection{Dynamic programming approach}
Let $G=(V,E)$ be a block graph with at least three vertices. If
$G=(V,E)$ has exactly one block, then $G$ is a complete graph and
$\{v\}, v \in V(G)$ is a minimum cardinality restrained dominating
set of $G$. So assume that $G$ be a connected block graph with at
least two blocks, and $T^{G}=(V^{G}, E^{G})$ be the refined cut-tree
of $G$. We make $T^{G}$ a rooted tree rooted at a cut vertex $c$ of
$G$. Consider the refined rooted cut-tree $T^{G}_{c}$ rooted at $c$,
corresponding to block graph $G$. We define the following
parameters:

$A_{c}(G)=$Min $\{|D|: c\in D$ and $D$ is a restrained dominating
set of $G\}.$

$B_{c}(G)=$Min $\{|D|: c\notin D$ and $D$ is a restrained dominating
set of $G\}.$

Since a minimum restrained dominating set of $G$ either contains the
cut vertex $c$ or does not contain the cut vertex $c$. Hence we have
the following straightforward result.

\begin{obs}
$\gamma_{r}(G)=Min (A_{c}(G), B_{c}(G))$.
\end{obs}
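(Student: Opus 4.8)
The plan is to read off the identity directly from the obvious dichotomy that any restrained dominating set of $G$ either contains the cut vertex $c$ or does not. Before doing so I would note that the two quantities are well defined so that the $\min$ makes sense: the set $V(G)$ itself is (vacuously) a restrained dominating set containing $c$, so $A_{c}(G)$ is a finite non-negative integer; the family of restrained dominating sets avoiding $c$ may in principle be empty (for instance when $G$ is a star centered at $c$), in which case we adopt the convention $B_{c}(G)=+\infty$, and the identity below still holds.

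Next I would establish $\gamma_{r}(G)\ge \min\{A_{c}(G),B_{c}(G)\}$. Let $D$ be a minimum restrained dominating set of $G$, so that $|D|=\gamma_{r}(G)$. If $c\in D$, then $D$ is one of the sets competing in the definition of $A_{c}(G)$, hence $\gamma_{r}(G)=|D|\ge A_{c}(G)\ge \min\{A_{c}(G),B_{c}(G)\}$; if $c\notin D$, then symmetrically $\gamma_{r}(G)=|D|\ge B_{c}(G)\ge \min\{A_{c}(G),B_{c}(G)\}$. In either case the inequality holds.

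For the reverse inequality, observe that every set counted in the definition of $A_{c}(G)$ (respectively $B_{c}(G)$) is in particular a restrained dominating set of $G$, so $A_{c}(G)\ge \gamma_{r}(G)$ and $B_{c}(G)\ge \gamma_{r}(G)$, whence $\min\{A_{c}(G),B_{c}(G)\}\ge \gamma_{r}(G)$. Combining the two inequalities yields $\gamma_{r}(G)=\min\{A_{c}(G),B_{c}(G)\}$. There is no real obstacle here: the statement is an immediate bookkeeping consequence of splitting on whether $c$ lies in the dominating set, and the only point worth making explicit is the well-definedness of $A_{c}(G)$ and $B_{c}(G)$ so that the identity is meaningful even in the degenerate case where no restrained dominating set avoids $c$.
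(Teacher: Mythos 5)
Your proof is correct and matches the paper's (one-line) justification: the paper simply notes that a minimum restrained dominating set either contains $c$ or does not, and you have spelled out the two resulting inequalities explicitly. Your added remark on well-definedness (taking $B_{c}(G)=+\infty$ when no restrained dominating set avoids $c$) is a sensible touch the paper omits, but the argument is the same.
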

These parameters can be computed in a bottom up approach using the
refined cut-tree rooted at the cut vertex $c$.

\subsection{Parameters to be computed at a cut vertex in the refined cut-tree}
Let $r$ be a cut vertex in the refined cut-tree $T_{c}^{G}$ rooted at cut vertex $c$ of block graph $G$. Let $T_{r}^{G}$ be the subtree of tree $T_{c}^{G}$ rooted at cut vertex $r$. If $R$ denotes the set containing the vertex $r$ and all its descendants, then $T_{r}^{G}=T_{c}^{G}[R]$. Let $G_{r}$ denote the subgraph of $G$ reconstructed from the subtree $T_{r}^{G}$ using following construction:\\

\noindent \textbf{Construction 1:}
Let $\mathcal{C}_{i}$ denotes the set of cut vertices of the tree
$T_{r}^{G}$, and let $\mathcal{B}_{i}$ denote the set of block
vertices of the tree $T_{r}^{G}$. Then
$V(G_{r})=\mathcal{C}_{i}\cup \{B\mid B\in \mathcal{B}_{i}\}$, and
$G_{r}=G[V(G_{r})]$.\\

We compute the following parameters at every cut vertex node $r$ of the tree $T_{c}^{G}$.

$A_{r}(G_{r})=$Min $\{|D|: r\in D$ and $D$ is a restrained dominating
set of $G_{r}\}.$

$B_{r}(G_{r})=$Min $\{|D|: r\notin D$ and $D$ is a restrained dominating
set of $G_{r}\}.$

$C_{r}(G_r)=$Min $\{|D| : r \notin D, D$ dominates $V(G_{r}) \setminus \{r\}$
and every vertex $v\notin D$ has an adjacent vertex $w\notin D\}$.

$D_{r}(G_{r})=$Min $\{|D| : r \notin D, D$ dominates $V(G_{r}) \setminus \{r\}$
and every vertex $v\notin D\cup\{r\}$ has an adjacent vertex
$w\notin D\}$.

$E_{c}(G_r)=$Min $\{|D| : r \notin D, D$ dominates $V(G_{r})$ and every vertex $v\notin D\cup\{r\}$ has an adjacent vertex $w\notin D\}$.


\subsection{Parameters to be computed at a block vertex in the refined cut-tree}

Suppose $T_{c}^{G}$ is the refined cut-tree rooted at the cut vertex
$c$ and $B_{i}$ is a block vertex of $T_{c}^{G}$. Let
$T_{B_{i}}^{G}$ denote the subtree of the refined cut-tree
$T_{c}^{G}$ rooted at the vertex $B_{i}$. Also suppose that $BG_{i}$
denotes the graph reconstructed from the tree $T_{B_{i}}^{G}$ using Construction $1$.
 Note that the tree $T_{B_{i}}^{G}$ is not
necessarily the refined cut-tree of the graph $BG_{i}$. Let $BC_{i}$
denote the block corresponding to the block vertex $B_{i}$. Now we
compute the following parameters at every block vertex $B_{i}$ of tree $T_{c}^{G}$.

$A_{B_{i}}(BG_{i})=$Min $\{|D|: D$ dominates $V(BG_{i})\setminus V(BC_{i})$ and every vertex $v\in V(BG_{i})\setminus D$
has an adjacent vertex $w\in V(BG_{i})\setminus D\}$.

$B_{B_{i}}(BG_{i})=$Min $\{|D|: D$ dominates $V(BG_{i})$ and every vertex $v\in V(BG_{i})\setminus (D\cup V(BC_{i}))$ has an
adjacent vertex $w\in V(BG_{i})\setminus D\}$.

$F_{B_{i}}(BG_{i})=$Min $\{|D|: D$ dominates $V(BG_{i})$, at least one vertex of $V(BC_{i}) \cap V(BG_{i})$
does not belong to $D$,  and every vertex $v\in V(BG_{i})\setminus (D\cup V(BC_{i}))$ has an adjacent vertex $w\in V(BG_{i})\setminus D\}$.

$H_{B_{i}}(BG_{i})=$Min $\{|D|: D$ dominates $V(BG_{i})$, at least one vertex of $V(BC_{i})\cap V(BG_{i})$ belongs to $D$,
 and every vertex $v\in V(BG_{i})\setminus (D\cup V(BC_{i}))$ has an adjacent vertex $w\in V(BG_{i})\setminus D\}$.

$I_{B_{i}}(BG_{i})=$Min $\{|D|: D$ dominates $V(BG_{i})$, at least one vertex of $V(BC_{i})\cap V(BG_{i})$ does not belong to $D$,
 at least one vertex of $V(BC_{i})\cap V(BG_{i})$ belongs to $D$  and every vertex $v\in V(BG_{i})\setminus (D\cup V(BC_{i}))$ has
  an adjacent vertex $w\in V(BG_{i})\setminus D\}$.

 \textbf{If $B_{i}$ is empty, then we also define the following three parameters: }

$C_{B_{i}}(BG_{i})=$Min $\{|D|: D$ dominates $V(BG_{i})$, at least one vertex of $V(BC_{i})\cap V(BG_{i})$ does not
belong to $D$, and every vertex $v\in V(BG_{i})\setminus (D\cup V(BC_{i}))$ has an adjacent vertex $w\in V(BG_{i})\setminus D\}$.

$D_{B_{i}}(BG_{i})=$Min $\{|D|: D$ dominates $V(BG_{i})$, at least one vertex of $V(BC_{i})\cap V(BG_{i})$ belongs to $D$,
and every vertex $v\in V(BG_{i})\setminus (D\cup V(BC_{i}))$ has an adjacent vertex $w\in V(BG_{i})\setminus D\}$.

$E_{B_{i}}(BG_{i})=$Min $\{|D|: D$ dominates $V(BG_{i})$, at least one vertex of $V(BC_{i})\cap V(BG_{i})$ does
not belong to $D$, at least one vertex of $V(BC_{i})\cap V(BG_{i})$ belongs to $D$  and every vertex
$v\in V(BG_{i})\setminus (D\cup V(BC_{i}))$ has an
adjacent vertex $w\in V(BG_{i})\setminus D\}$.

\subsection{Computation of the parameters at a leaf vertex in the refined cut-tree}

\noindent {\bf Rule $1$:}\\

Let $B_{i}$ be a leaf vertex of tree $T^{G}_{c}$, and $BC_{i}$ be the corresponding block of $G$. Note that every
leaf vertex of tree is a block vertex.  In this case,
$T^{G}_{B_{i}}$ consists of only one block-vertex $B_{i}$, and $BG_{i}=G[B_{i}]$. Note that $B_{i}\neq \emptyset$ and
$V(BG_{i})\setminus V(BC_{i})=\emptyset$.
 Now, the parameters can be computed in the following way:

\[
    A_{B_{i}}(BG_{i})=
\begin{cases}
    1 ,&  if \hspace*{.2cm}|B_{i}|=1 \\
    0,           &   otherwise
\end{cases}
\]


$$B_{B_{i}}(BG_{i})=1$$.

\[
    F_{B_{i}}(BG_{i})=
\begin{cases}
    \infty ,&  if \hspace*{.2cm}|B_{i}|=1 \\
    1,           &   otherwise
\end{cases}
\]

$$H_{B_{i}}(BG_{i})=1$$.

\[
    I_{B_{i}}(BG_{i})=
\begin{cases}
    \infty ,&  if \hspace*{.2cm}|B_{i}|=1 \\
    1,           &   otherwise
\end{cases}
\]

\subsection{Computation of the parameters at a cut vertex in the refined cut-tree}
\noindent {\bf Rule $2$:}\\

Let $r$ be a cut vertex of the refined cut-tree $T^{G}_{c}$. Let $T^{G}_{r}$ be the subtree of $T^{G}_{c}$ rooted at $r$ and $G_{r}$ be the graph
reconstructed from the tree $T^{G}_{r}$, as discussed earlier.
Also suppose that the  vertex $r$ is having $k$
children, say $B_{1},B_{2},\ldots,B_{k}$, in the tree $T^{G}_{c}$. Let
$BC_{1},BC_{2},\ldots,BC_{k}$ be the blocks of $G$ corresponding to
the block vertices $B_{1},B_{2},\ldots,B_{k}$, respectively. Let $T_{B_{1}}^{G},T_{B_{2}}^{G},\ldots, T_{B_{k}}^{G}$ be the
subtrees of the tree $T_{c}^{G}$, and $BG_{1},BG_{2},\ldots,BG_{k}$ be the subgraphs of $G$ reconstructed from
the trees $T_{B_{1}}^{G},T_{B_{2}}^{G},\ldots, T_{B_{k}}^{G},$ respectively. Then, the parameters
corresponding to cut vertex $r$
can be computed in the following way:

\subsection*{Parameter $A_{r}(G_{r})$}
Suppose $D$ is a minimum cardinality restrained dominating set of
$G_{r}$ containing the vertex $r$. Then $r$ dominates all the vertices
of the blocks $BC_{1},BC_{2},\ldots,BC_{k}$. Hence, the parameter
$A_{r}(G_{r})$ can be defined in the following way:

$A_{r}(G_{r})=1+\sum_{i=1}^{k}A_{B_{i}}(BG_{i})$.

\subsection*{Parameter $B_{r}(G_{r})$}
Suppose $D$ is a minimum cardinality restrained dominating set of $G_{r}$ not containing the vertex $r$.
Since $B_{1},B_{2},\ldots,B_{k}$ are children of $r$ in the tree $T_{r}^{G}$, the cut vertex $r$ belong to $k$ blocks
$BC_{1},BC_{2},\ldots,BC_{k}$, where $k\geq 1$. Now we may have two possibilities for the set $D$.

\noindent $(i)$ There exists an index $i$, $1\leq i \leq k$, such that $V(BG_{i})\setminus D \neq \emptyset$, and $V(BG_{i})\cap D \neq \emptyset$. In
this case
$|D|=\phi=Min_{i}(I_{B_{i}}(BG_{i})+\sum_{j=1, (j \neq i)}^{k}B_{B_{j}}(BG_{j}))$.

\noindent $(ii)$ There exist indices $i,j$, $1\leq i,j \leq k$, such that $V(BG_{i})\setminus D \neq \emptyset$, and $V(BG_{j})\cap D \neq \emptyset$. In
 this case
$|D|=\psi=Min_{i,j(i\neq j)}(F_{B_{i}}(BG_{i})+H_{B_{j}}(BG_{j})+\sum_{m=1, (m\neq i,j)}^{k}B_{B_{m}}(BG_{m}))$.

Hence $B_{r}(G_{r})=Min(\phi,\psi)$.

\subsection*{Parameter $C_{r}(G_{r})$}
$C_{r}(G_{r})$ can be computed as follows.

$C_{r}(G_{r})=Min_{i}(F_{B_{i}}(BG_{i})+\sum_{j=1,(j\neq
i)}^{k}B_{B_{j}}(BG_{j}))$.

\subsection*{Parameter $D_{r}(G_{r})$}
Let $D$ be a minimum cardinality set not containing $r$ such that
$D$ dominates all the vertices of $G_{r}$ except $r$, and every vertex
$v\notin D\cup \{r\}$ has an adjacent vertex $w \notin D$. Now, each
vertex in the block $V(BC_{i})$, for all $i$, $1\leq i \leq k$, is
adjacent to the vertex $r$, which is not in $D$. Hence $D_{r}(G_{r})$ is
equal to the cardinality of a minimum cardinality set $D$ not
containing $r$ such that $D$ dominates all the vertices of $G_{r}$
except $r$, and every vertex $v\notin D \cup (\cup_{i=1}^{k}BC_{i})$
has an adjacent vertex $w\notin D$. So, the parameter $D_{r}(G_{r})$ can
be defined in the following way:

$D_{r}(G_{r})=\sum_{i=1}^{k}B_{B_{i}}(BG_{i})$.

\subsection*{Parameter $E_{r}(G_{r})$}
Here we have two cases to consider.

If at least one of the $B_{i}$ is non-empty, then

$E_{r}(G_{r})=\sum_{i=1}^{k}{B_{B_i}}(BG_{i})$, \\ otherwise, if all the $B_{i}'s$ are empty, then

$E_{r}(G_{r})=Min_{i}(D_{B_{i}}(BG_{i})+\sum_{j=1, (j\neq i)}^{k}B_{B_{j}}(BG_{j}))$.

\subsection{Computation of the parameters at a non-leaf block vertex in the refined cut-tree}
\noindent {\bf Rule $3$:}\\

Let $B_{k}$ be a non-leaf block vertex of tree $T_{c}^{G}$ and $BC_{k}$ be the corresponding block of $G$.
Let $T_{B_{k}}^{G}$ denote the subtree of the tree
$T_{c}^{G}$ rooted at the vertex $B_{k}$. Let $BG_{k}$
denote the graph reconstructed from the tree $T_{B_{k}}^{G}$ as discussed earlier. Also suppose that
 $x_{1},x_{2},\ldots,x_{p}$ are children of $B_{k}$
in the tree  $T_{c}^{G}$. Let $T_{x_{1}}^{G},T_{x_{2}}^{G},\ldots,T_{x_{p}}^{G}$ be the subtrees of the tree
$T_{c}^{G}$ rooted at the vertices  $x_{1},x_{2},\ldots,x_{p}$
 respectively, and $G_{x_{1}},G_{x_{2}},\ldots,G_{x_{p}}$ be subgraphs of $G$ reconstructed from the trees
 $T_{x_{1}}^{G},T_{x_{2}}^{G},\ldots,T_{x_{p}}^{G}$, respectively. Then, the parameters corresponding to block
 vertex $B_{k}$ can be computed in the following way:

\subsubsection*{Parameter $A_{B_{k}}(BG_{k})$}
Let $D$ be a minimum cardinality subset of $V(BG_{i})$ such that
$D$ dominates $V(BG_{i})\setminus V(BC_{i})$, and every vertex
 $v\in V(BG_{i})\setminus D$ has an adjacent vertex $w\in V(BG_{i})\setminus D$.
  Here we have three cases to consider depending on the cardinality of the set $B_{k}$.

 \textbf{Case 1:} $|B_{k}|\geq 2$.

In this case, we have  $A_{B_{k}}(BG_{k})=\sum_{i=1}^{p}
Min(A_{x_{i}}(G_{x_{i}}),D_{x_{i}}(G_{x_{i}}))$.

\textbf{Case 2:} $|B_{k}|= 1$.

Again we have following subcases to consider.

Subcase $2.1$ : $|B_{k}|=1$ and the vertex of $B_{k}$ does not belong to $D$.

Here $|D|=D1=Min_{i}(D_{x_{i}}(G_{x_{i}})+\sum_{j=1, (j\neq i)}^{p}Min(A_{x_{j}}(G_{x_{j}}),D_{x_{j}}(G_{x_{j}}))$.

Subcase $2.2$ :  $|B_{k}|=1$ and the vertex of $B_{k}$ belongs to $D$.

This is possible only if all the children of $B_{k}$ belong to $D$, that is, $\{x_{1},x_{2},\ldots,x_{p}\}\subseteq D$.

Here $|D|=D2=1+\sum_{i=1}^{p}A_{x_{i}}(G_{x_{i}})$.

Hence, in this case $A_{B_{k}}(BG_{k})=Min(D1,D2)$.

\textbf{Case 3:} $|B_{k}|= 0$.

Again we have following subcases to consider.

Subcase 3.1 : $B_{k}$ has exactly one child, say, $x_{1}$.

$A_{B_{k}}(BG_{k})=Min(A_{x_{1}}(G_{x_{1}}),C_{x_{1}}(G_{x_{1}})).$

Subcase 3.2 : $B_{k}$ has two or more child. Again there are three possibilities:

(i) All the children of $B_{k}$ in the tree $T_{B_{k}}^{G}$ belong to $D$.

Here $|D|=D1=\sum_{i=1}^{p}A_{x_{i}}(G_{x_{i}})$.

(ii) Exactly one child of $B_{k}$ does not belong to $D$.

Here $|D|=D2=Min_{i}(C_{x_{i}}(G_{x_{i}})+\sum_{j=1, (j\neq i)}^{p}A_{x_{j}}(G_{x_{j}}))$.

(iii) At least two child of $B_{k}$ do not belong to $D$.

Here $|D|=D3=Min_{i,j (i\neq j)}(D_{x_{i}}(G_{x_{i}})+D_{x_{j}}(G_{x_{j}})+\sum_{k=1, (k\neq i,j)}^{p}Min(A_{x_{k}}(G_{x_{k}}),D_{x_{k}}(G_{x_{k}})))$.

Hence, in this subcase $A_{B_{k}}(BG_{k})=Min(D1,D2,D3)$.

\subsubsection*{Parameter $B_{B_{k}}(BG_{k})$}
Let $D$ be the minimum cardinality subset of the set $V(BG_{k})$ such that $D$
dominates $V(BG_{k})$, and every vertex $v\in V(BG_{k})\setminus (D\cup BC_{k})$ has an adjacent
vertex $w \in V(BG_{k})\setminus D$.
Here we have two cases to consider depending on the cardinality of the set $B_{k}$.

\textbf{Case 1:} $B_{k}$ is non-empty.

Again there are two possibilities:

(i) A vertex $x\in B_{k}$ belongs to $D$.

Here $|D|=D1=1+\sum_{i=1}^{p} Min(A_{x_{i}}(G_{x_{i}}),D_{x_{i}}(G_{x_{i}}))$.

(ii) No vertex of $B_{k}$ belongs to $D$.

Here $|D|=D2=Min_{i}(A_{x_{i}}(G_{x_{i}})+\sum_{j=1, (j\neq i)}^{p}Min(A_{x_{j}}(G_{x_{j}}),D_{x_{j}}BG_{x_{j}}))$.

Hence, in this case $B_{B_{k}}(BG_{k})=Min(D1,D2)$.

\textbf{Case 2:} $B_{k}$ is empty.

Again there are two possibilities to consider.

(i) No child of $B_{k}$ belongs to $D$ (that is, $\{x_{1},x_{2},\ldots,x_{p}\}\cap D=\emptyset$).

Here $|D|=D1=\sum_{i=1}^{p}E_{x_{i}}(G_{x_{i}})$.

(ii) At least one child of $B_{k}$ belongs to $D$ (that is, $\{x_{1},x_{2},\ldots,x_{p}\}\cap D\neq \emptyset$).

Here $|D|=D2=Min_{i}(A_{x_{i}}(G_{x_{i}})+\sum_{j=1, (j\neq i)}^{p}Min(A_{x_{j}}(G_{x_{j}}),D_{x_{j}}(G_{x_{j}}))$.

Hence, in this case $B_{B_{k}}(BG_{k})=Min(D1,D2)$.

\subsubsection*{Parameter $E_{B_{k}}(BG_{k})$}
Note that this parameter is computed only if $B_{k}$ is an empty set.

If $B_{k}$ has exactly one child in the tree $T_{B_{k}}^{G}$, that is, $p=1$, then $E_{B_{k}}(BG_{k})=\infty$.

Otherwise $E_{B_{k}}(BG_{k})=Min_{i,j}(A_{x_{i}}(G_{x_{i}})+D_{x_{j}}(G_{x_{j}})+\sum_{m=1, (m\neq i,j)}^{p}Min(A_{x_{m}}(G_{x_{m}}),D_{x_{m}}(G_{x_{m}})))$.

\subsubsection*{Parameter $D_{B_{k}}(BG_{k})$}
Note that this parameter is computed only if $B_{k}$ is an empty set.

$D_{B_{k}}(BG_{k})=Min_{i}(A_{x_{i}}(G_{x_{i}})+\sum_{j=1, (j\neq i)}^{p}Min(A_{x_{j}}(G_{x_{j}}),D_{x_{j}}(G_{x_{j}}))$.

\subsubsection*{Parameter $C_{B_{k}}(BG_{k})$}
Note that this parameter is computed only if $B_{k}$ is an empty set.
Let $D$ be a minimum cardinality subset of $V(BG_{k})$ such that $D$ dominates $V(BG_{k})$, at least one vertex of $BC_{k}\cap V(BG_{k})$
does not belong to $D$ and every vertex $v\in V(BG_{k})\setminus (D \cup BC_{k})$ has an adjacent vertex $w\in V(BG_{k})\setminus D$.
Here we have two cases to consider.

\textbf{Case 1:} No child of $B_{k}$ belong to $D$.

Here $|D|=D1=\sum_{i=1}^{p}E_{x_{i}}(G_{x_{i}})$.

\textbf{Case 2:} At least one child of $B_{k}$ belongs to $D$.

Here $|D|=D2=E_{B_{k}}(BG_{k})$.

Hence $C_{B_{k}}(BG_{k})=Min(D1,D2)$.

\subsubsection*{Parameter $F_{B_{k}}(BG_{k})$}
Since $B_{k}$ is not a leaf vertex of tree $T_{c}^{G}$, $|B_{k}|\neq |BG_{k}|$.
Now we have two cases to consider depending on the cardinality of $B_{k}$.

\textbf{Case 1:} $B_{k}$ is non-empty.

Here $F_{B_{k}}(BG_{k})=B_{B_{k}}(BG_{k})$.

\textbf{Case 2:} $B_{k}$ is empty.

Here $F_{B_{k}}(BG_{k})=C_{B_{k}}(BG_{k})$.

\subsubsection*{Parameter $H_{B_{k}}(BG_{k})$}
Since $B_{k}$ is not a leaf vertex of tree $T_{c}^{G}$, $|B_{k}|\neq |BG_{k}|$.
Now we have two cases to consider depending on the cardinality of $B_{k}$.

\textbf{Case 1:}  $B_{k}$ is non-empty.

$H_{B_{k}}(BG_{k})=B_{B_{k}}(BG_{k})$.

\textbf{Case 2:} $B_{k}$ is empty.

$I_{B_{k}}(BG_{k})=D_{B_{k}}(BG_{k})$.

\subsubsection*{Parameter $I_{B_{k}}(BG_{k})$}

We have two cases to consider depending on the cardinality of $B_{k}$.

\textbf{Case 1:}  $B_{k}$ is non-empty.

Here $I_{B_{k}}(BG_{k})=B_{B_{k}}(BG_{k})$.

\textbf{Case 2:} $B_{k}$ is empty.

Here $I_{B_{k}}(BG_{k})=E_{B_{k}}(BG_{k})$.

\subsection{Algorithm}
We are now ready to propose an algorithm to compute $\gamma_{r}(G)$
of a block graph $G=(V,E)$ having
at least two blocks.

\begin{algorithm}[H]
\caption{Algorithm-RD(G)}
 \textbf{Input:} A block graph $G=(V,E)$ with at least two blocks.\\
\textbf{Output:} $\gamma_r(G)$.\\
\Begin{
Compute the refined cut-tree $T$ of $G$ rooted a cut vertex $c$
of $G$;\\
Compute a reverse BFS ordering $\alpha=(v_1,v_2,\ldots,v_n=c)$ of
$T$;\\
\For {i=1 to n }{
\If{$v_{i}$ is a leaf vertex of $T$}{
Compute the parameters at $v_{i}$ using Rule $1$;
}
\ElseIf{$v_{i}$ is a cut vertex of $T$}
{
Compute the parameters at $v_{i}$ using Rule $2$;
}
\ElseIf{$v_{i}$ is a non-leaf block vertex of $T$}
{
Compute the parameters at $v_{i}$ using Rule $3$;
}
}
$\gamma_{r}(G)=Min (A_{c}(G), B_{c}(G))$ for the root $c$ of $T$.\\
return $\gamma_{r}(G)$;
}
\end{algorithm}
%
%
%
%
%
%
%
%
%
%
%
%
%
%
%
%
%



\subsection{Illustration of the algorithm}

We now illustrate our algorithm using an example. Fig.~\ref{fig:2}
shows an example of block graph $G$ having seven blocks,
$BC_{1}=G[\{v_{5},v_{6},v_{7}\}]$,
$BC_{2}=G[\{v_{2},v_{3},v_{4},v_{5}\}]$,
$BC_{3}=G[\{v_{1},v_{2}\}]$,
$BC_{4}=G[\{v_{7},v_{8},v_{9},v_{10}\}]$,
$BC_{5}=G[\{v_{10},v_{11}\}]$, $BC_{6}=G[\{v_{11},v_{12}\}]$,
$BC_{7}=G[\{v_{9},v_{13},v_{14}\}]$ and the corresponding refined
cut-tree rooted at vertex $v_{6}$, say $T^{G}_{v_{6}}$. The cut-tree
$T^{G}_{v_{6}}$ is having $13$ vertices, in which six are cut
vertices $v_{2},v_{5},v_{7},v_{9},v_{10},v_{11}$ and seven are block
vertices $B_{1}=\{v_{6}\}$, $B_{2}=\{v_{3},v_{4}\}$,
$B_{3}=\{v_{1}\}$, $B_{4}=\{v_{8}\}$, $B_{5}=\emptyset$,
$B_{6}=\{v_{12}\}$, $B_{7}=\{v_{13},v_{14}\}$.

For the refined cut-tree of $G$ shown in Fig~\ref{fig:2}, the reverse of BFS
ordering is:

$\alpha=(B_{6},B_{3},v_{11},v_{2},B_{7},B_{5},B_{2},v_{9},v_{10},v_{5},B_{4},B_{1},v_{7})$.

\begin{enumerate}
\item[(1)] $B_{6}$ is a block vertex, $BC_{6}$ is an end block of $G$, $|B_{6}|=1$, $|V(BC_{6})|=2$.\\
$A_{B_{6}}=1$, $B_{B_{6}}=1$, $F_{B_{6}}=\infty$, $H_{B_{6}}=1$, $I_{B_{6}}=\infty$.
\item[(2)] $B_{3}$ is a block vertex, $BC_{3}$ is an end block of $G$, $|B_{3}|=1$, $|V(BC_{3})|=2$.\\
$A_{B_{3}}=1$, $B_{B_{3}}=1$, $F_{B_{3}}=\infty$, $H_{B_{3}}=1$, $I_{B_{3}}=\infty$.
\item[(3)] $v_{11}$ is a cut vertex having only one child $B_{6}$ in the tree $T^{G}_{v_{6}}$ and $|B_{6}|=1$.\\
$A_{v_{11}}=1+A_{B_{6}}=2$, $B_{v_{11}}=I_{B_{6}}=\infty$,  $C_{v_{11}}=F_{B_{6}}=\infty$, $D_{v_{11}}=B_{B_{6}}=1$, $E_{v_{11}}=B_{B_{6}}=1$.
\item[(4)] $v_{2}$ is a cut vertex having only one child $B_{3}$ in the tree $T^{G}_{v_{6}}$ and $|B_{3}|=1$.\\
$A_{v_{2}}=1+A_{B_{3}}=2$, $B_{v_{2}}=I_{B_{3}}=\infty$, $C_{v_{2}}=F_{B_{3}}=\infty$, $D_{v_{2}}=B_{B_{3}}=1$ $E_{v_{2}}=B_{B_{3}}=1$.
\item[(5)] $B_{7}$ is a block vertex, $BC_{7}$ is an end block of $G$, $|B_{7}|=2$, $|V(BC_{7})|=3$.\\
$A_{B_{7}}=0$, $B_{B_{7}}=1$, $F_{B_{7}}=1$, $H_{B_{7}}=1$, $I_{B_{7}}=1$.
\item[(6)] $B_{5}$ is a block vertex, $BC_{5}$ is a block of $G$, $|B_{5}|=0$, $|V(BC_{5})|=2$, and $B_{5}$ is having one child $v_{11}$ in the tree $T^{G}_{v_{6}}$.\\
$A_{B_{5}}=Min(A_{v_{11}},C_{v_{11}})=2$, $B_{B_{5}}=Min(E_{v_{11}},A_{v_{11}})=1$, $E_{B_{5}}=\infty$, $D_{B_{5}}=A_{v_{11}}=2$, $C_{B_{5}}=Min(E_{v_{11}},E_{B_{5}})=1$, $F_{B_{5}}=C_{B_{5}}=1$, $H_{B_{5}}=D_{B_{5}}=2$, $I_{B_{5}}=E_{B_{5}}=\infty$.
\item[(7)] $B_{2}$ is a block vertex, $BC_{2}$ is a block of $G$, $|B_{2}|=2$, $|V(BC_{2})|=4$, and $B_{2}$ is having one child $v_{1}$ in the tree $T^{G}_{v_{6}}$.\\
$A_{B_{2}}=Min(B_{v_{2}},1+A_{v_{2}})=1$, $B_{B_{2}}=Min(1+Min(A_{v_{2}},B_{v_{2}}),A_{v_{2}})=2$, $F_{B_{2}}=2$, $H_{B_{2}}=2$, $I_{B_{2}}=2$.
\item[(8)] $v_{9}$ is a cut vertex having only one child $B_{7}$ in the tree $T^{G}_{v_{6}}$ and $|B_{7}|=2$.\\
$A_{v_{9}}=1+A_{B_{7}}=1$, $B_{v_{9}}=I_{B_{7}}=1$, $C_{v_{9}}=F_{B_{7}}=1$, $D_{v_{9}}=B_{B_{7}}=1$,  $E_{v_{9}}=B_{B_{7}}=1$.

\item[(9)] $v_{10}$ is a cut vertex having only one child $B_{5}$ in the tree $T^{G}_{v_{6}}$ and $|B_{5}|=0$.\\
$A_{v_{10}}=1+A_{B_{5}}=3$, $B_{v_{10}}=I_{B_{5}}=\infty$, $C_{v_{10}}=F_{B_{5}}=1$, $D_{v_{10}}=B_{B_{5}}=1$, $E_{v_{10}}=D_{B_{5}}=2$.

\item[(10)] $v_{5}$ is a cut vertex having only one child $B_{2}$ in the tree $T^{G}_{v_{6}}$ and $|B_{2}|=2$.\\
$A_{v_{5}}=1+A_{B_{2}}=2$,  $B_{v_{5}}=I_{B_{2}}=2$, $C_{v_{5}}=F_{B_{2}}=2$, $D_{v_{5}}=B_{B_{2}}=2$, $E_{v_{5}}=B_{B_{2}}=2$.

\item[(11)] $B_{4}$ is a block vertex, $BC_{4}$ is a block of $G$, $|B_{4}|=1$, $|V(BC_{4})|=4$, and $B_{4}$ is having two child $v_{10}$ and $v_{9}$ in the tree $T^{G}_{v_{6}}$.\\
$A_{B_{4}}=Min(D1,D2)$ where $D1=Min(B_{v_{10}}+Min(A_{v_{9}},B_{v_{9}}),B_{v_{9}}+Min(A_{v_{10}},B_{v_{10}}))=2$ and $D2=1+A_{v_{9}}+A_{v_{10}}=5$, So $A_{B_{4}}=2$. \\
$B_{B_{4}}=Min(D_{1},D_{2})$ where $D1=1+Min(A_{v_{9}},B_{v_{9}})+ Min(A_{v_{10}},B_{v_{10}})=3$ and $D2=Min(A_{v_{9}}+Min(A_{v_{10}},B_{v_{10}}),A_{v_{10}}+Min(A_{v_{9}},B_{v_{9}}))=2$, So $B_{B_{4}}=2$. \\
$F_{B_{4}}=2$, $H_{B_{4}}=2$, $I_{B_{4}}=2$.

\item[(12)] $B_{1}$ is a block vertex, $BC_{1}$ is a block of $G$, $|B_{1}|=1$, $|V(BC_{1})|=3$, and $B_{1}$ is having one child $v_{5}$ in the tree $T^{G}_{v_{6}}$.\\
$A_{B_{1}}=Min(B_{v_{5}},1+A_{v_{5}})=2$, $B_{B_{1}}=Min(1+Min(A_{v_{5}},B_{v_{5}}),A_{v_{5}})=2$, $F_{B_{1}}=2$, $H_{B_{1}}=2$, $I_{B_{1}}=2$.

\item[(13)] $v_{7}$ is a cut vertex having two child $B_{1}$ and $B_{4}$ in the tree $T^{G}_{v_{6}}$ and $|B_{1}|=|B_{4}|=1$.\\
$A_{v_{7}}=1+A_{B_{1}}+A_{B_{4}}=5$, $B_{v_{7}}=Min(I_{B_{1}}+B_{B_{4}},I_{B_{4}}+B_{B_{1}},F_{B_{1}}+H_{B_{4}},F_{B_{4}}+H_{B_{1}})=4$, $C_{v_{7}}=Min(F_{B_{1}}+B_{B_{4}},F_{B_{4}}+B_{B_{1}})=4$,
$D_{v_{7}}=B_{B_{1}}+B_{B_{4}}=4$, $E_{v_{7}}=B_{B_{1}}+B_{B_{4}}=4$.

\end{enumerate}

Thus the minimum restrained domination number $\gamma_{r}(G)=Min(A_{v_{7}},D_{v_{7}})=4$.

\subsection{Complexity details}
\begin{theorem}
The restrained domination number of a block graph $G$ can be computed in $O(n^{4})$ time, where $n$
denotes the number of vertices in $G$.
\end{theorem}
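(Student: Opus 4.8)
The plan is to show that Algorithm-RD runs in $O(n^4)$ time by analyzing the cost of processing each vertex of the refined cut-tree $T^G$ together with a bound on the total number of such vertices. First I would observe that the preprocessing steps---constructing the refined cut-tree rooted at a cut vertex $c$ (done in linear time by depth-first search, as noted in Section~\ref{sec:5}) and computing a reverse BFS ordering $\alpha$ of $T^G$---cost only $O(n+m)$ time, which is subsumed. Since each block of $G$ contributes one block-vertex and each cut-vertex of $G$ contributes one node, $T^G$ has $O(n)$ vertices, and the main loop iterates $O(n)$ times.

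Next I would bound the work done at a single node. The easy case is a leaf block-vertex: by Rule~1 all five parameters are computed by $O(1)$ arithmetic, so leaves cost $O(1)$ each. For an internal node $v_i$ with children $u_1,\dots,u_t$ (either a cut-vertex handled by Rule~2 or a non-leaf block-vertex handled by Rule~3), the dominant subexpressions are of the form $\mathrm{Min}_{i\neq j}$ over ordered or unordered pairs of children, such as the terms defining $\psi$ in $B_r(G_r)$ or $D3$ in the $|B_k|=0$ subcase of $A_{B_k}(BG_k)$. A naive evaluation of such a double minimum over all pairs costs $O(t^2)$; all other pieces (the single minima $\mathrm{Min}_i$ and the sums $\sum_{i=1}^{t}$) cost $O(t)$ once the total sum $\sum_i B_{B_i}(BG_i)$, $\sum_i \mathrm{Min}(A_{x_i},D_{x_i})$, etc., are precomputed. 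Hence each internal node $v_i$ is processed in $O(t_i^2)$ time, where $t_i$ is its number of children.

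Then I would sum over all nodes: $\sum_i O(t_i^2) \le O\big((\sum_i t_i)^2\big) = O(n^2)$, since $\sum_i t_i$ is the number of edges of $T^G$, which is $O(n)$. Combined with the $O(n)$ loop overhead and $O(n+m)$ preprocessing, this gives a total of $O(n^2)$ for the parameter computation, and a final $\mathrm{Min}(A_c(G),B_c(G))$ in $O(1)$; so the algorithm actually runs in $O(n^2)$ time, comfortably within the claimed $O(n^4)$ bound. To match the stated $O(n^4)$ exactly one can simply bound each node's work crudely by $O(n^2)$ (each double minimum ranges over at most $n$ children, and there are a constant number of parameters, each a constant number of such double minima) and multiply by the $O(n)$ nodes in $T^G$.

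The part requiring the most care is verifying that every one of the roughly fifteen parameter recurrences in Rules~1--3 is indeed evaluable within the claimed per-node budget---in particular that the nested-minimum formulas (the $\mathrm{Min}_{i,j}$ expressions with an exception $j\neq i$, and the mixed cases such as $E_{B_k}$ with $\mathrm{Min}_{i,j}$ over possibly equal indices) can be handled by precomputing, for each child list, the running sum and the one or two smallest ``savings'' values $C_{x_i}-\min(A_{x_i},D_{x_i})$, $D_{x_i}-\min(A_{x_i},D_{x_i})$, etc.; this reduces each double minimum to $O(t)$ and would even sharpen the bound, but for the theorem as stated the crude $O(n^2)$-per-node estimate suffices and I would present that to keep the argument short.
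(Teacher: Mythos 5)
Your proposal is correct and follows the same overall strategy as the paper: both proofs analyze Algorithm-RD by bounding the cost of evaluating the Rule 1--3 recurrences at each of the $O(n)$ nodes of the refined cut-tree after the $O(n+m)$ preprocessing. The difference is in the accounting. The paper bounds the work at a single cut vertex or non-leaf block vertex by $O(d_T(v)^3)\le O(n^3)$ (it evaluates the pairwise minima naively, recomputing the residual sums $\sum_{m\neq i,j}$ for each pair) and multiplies by the $O(n)$ nodes to get $O(n^4)$. You instead precompute the child sums so that each term of a double minimum costs $O(1)$, getting $O(t_i^2)$ per node, and then sum $t_i^2$ over the tree using $\sum_i t_i = O(n)$ to obtain $O(n^2)$ overall --- a strictly sharper bound that of course still establishes the stated $O(n^4)$. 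Your observation that the double minima can be reduced further to $O(t)$ via the ``smallest savings'' trick is also sound. One small caution: the theorem asserts that $\gamma_r(G)$ \emph{can be computed}, so a complete proof should at least record (as the paper does in one sentence) that the output is correct because the parameter recurrences of Rules 1--3 are valid; your write-up addresses only the running time, so add that sentence before presenting the complexity analysis.
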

\begin{proof}
The proof of correctness of Algorithm\_RD follows from the
recurrence relations obtained in the earlier part of this section.

Next we analyze the running time of the algorithm.

 Let $G=(V,E)$ be
a block graph with $n$ vertices and $m$ edges.
 The refined cut-tree $T$ of $G$ can be computed in $O(n+m)$ time using depth-first search
 similar to the use of depth-first search in constructing a cut-tree of $G$.
 Note that the number of vertices in the tree $T$ is also $O(n)$.
 While constructing the tree $T$, we can also maintain the following information
  for each vertex $v$ of the refined cut-tree: $(i)$ $v$ is a cut vertex or block vertex, $(ii)$ if $v$ is a block vertex,
  then the number of vertices in the corresponding block, and $(iii)$ if $v$ is a block vertex,
  then number of non-cut vertices in the corresponding block. All these information can be maintained in linear time.

Next we find a reverse of BFS ordering, say $\alpha$, of the vertices of the refined
cut-tree $T$. The ordering $\alpha$ can be computed in $O(n)$ time. Now we process the vertices
of $T$ in the ordering $\alpha$, and for each vertex we compute some parameters using Rules $1$, $2$, and $3$,
depending on the case whether it is a leaf vertex (block vertex), non-leaf block vertex, or a cut vertex.
If $v$ is a block vertex and leaf of tree $T$, then we process it according to Rule $1$, and
all the parameters can be computed in $O(1)$ time. If $v$ is a non-leaf block vertex,
we process it according to Rule $3$. If $v$ is a cut vertex, we process it according to Rule $2$.

 Let $T_{1}(n)$ be the time required to compute all the parameters
 for any cut vertex given all the computed information for its children.
 Similarly, let $T_{2}(n)$ be the time required to compute all the parameters
 for any non-leaf block vertex given all the computed information for its children.
  Then the time complexity of
 the algorithm is $T(n)=O(n+m)+O(nT_{1}(n)+nT_{2}(n))$. Now we only need to find $T_{1}(n)$ and $T_{2}(n)$.

 Let $c(v)$ denote the number of children of $v$ in tree $T$, then $d_{T}(v)=c(v)+1$.

 \noindent\textbf{Estimate for $T_{1}(n):$}\\
 From the formula for $A_{v},B_{v},C_{v},D_{v},$ and $E_{v}$, it is easy to note that
  $A_{v}$ can be computed in $O(c(v))=O(d_{T}(v))$ time,
  $B_{v}$ can also be computed in $O(d_{T}(v))$ time,
  $C_{v}$ can be computed in $O(d_{T}(v)^{2})$ time, $D_{v}$ can be computed in $O(d_{T}(v)^{3})$ time,
  and  $E_{v}$ can be computed in $O(d_{T}(v)^{2})$ time. Hence $T_{1}(n)=O(d_{T}(v)^{3})=O(n^{3})$.

 \noindent\textbf{Estimate for $T_{2}(n):$}\\
 It is easy to note from the formula for
 $A_{B},B_{B},C_{B},D_{B},E_{B},F_{B},H_{B},$ and $I_{B}$, that  $A_{B}$ can be computed in $O(d_{T}(B)^{3})$ time,
  $B_{B}$ can  be computed in $O(d_{T}(B)^{2})$ time,
  $E_{B}$ can be computed in $O(d_{T}(B)^{3})$ time, $D_{B}$ can be computed in $O(d_{T}(B)^{2})$ time,
   $C_{B}$ can be computed in $O(d_{T}(B))$ time,
   $F_{B}$ can be computed in $O(1)$ time, $H_{B}$ can be computed in $O(1)$ time,
   and  $I_{B}$ can be computed in $O(1)$ time. Hence $T_{2}(n)=O(d_{T}(B)^{3})=O(n^{3})$.

 Therefore $T_{1}(n)+T_{2}(n)=O(n^{3})$, and hence $T(n)=O(n^{4})$.

 Hence our algorithm can be implemented in $O(n^{4})$ time and the correctness of the algorithm follows from the recurrences defined in above subsections. Therefore, the theorem is true.
\end{proof}

Note that a minimum cardinality restrained dominating set of a block
graph $G$ can be computed in $O(n^4)$ time by maintaining the sets
corresponding to the computed parameters at every node of the
refined cut-tree $T$ of $G$.

\section{Restrained domination in threshold graphs}
\label{sec:7} In this section we give a method to compute a minimum restrained dominating set of a threshold graph $G$ in linear-time.  A graph $G=(V,E)$ is called a \emph{threshold graph} if there is a real number $T$ and a real number $w(v)$ for every $v\in V$ such that a set $S\subseteq V$ is independent if and only if $\sum_{v\in S}w(v)\leq T$~\cite{tec_report}. Many characterizations of threshold graphs are available in the literature. An important characterization of threshold graph, which is used in designing polynomial time algorithms is following:

A graph $G$ is threshold graph if and only if it is a split graph and, for any split partition $(C,I)$ of $G$, there is an ordering $(x_{1},x_{2},\ldots,x_{p})$ of the vertices of $C$ such that $N_{G}[x_{1}]\subseteq N_{G}[x_{2}] \subseteq \cdots \subseteq N_{G}[x_{p}]$, and there is an ordering $(y_{1},y_{2},\ldots,y_{q})$ of the vertices of $I$ such that $N_{G}(y_{1}) \supseteq N_{G}(y_{2}) \supseteq \cdots \supseteq N_{G}(y_{q})$~\cite{threshold}.

\begin{theorem}
Let $G=(V,E)$ be a threshold graph having at least three vertices with split partition $(C,I)$ as defined above, then a minimum restrained dominating set $D_{r}^{*}$ of $G$ can be computed in the following way:
\[D_{r}^{*}=
\begin{cases}
        V, & if \hspace*{.2cm} p==1\\
       \{x_{p}\}, &  if \hspace*{.2cm} p>1 \hspace*{.1cm} and \hspace*{.1cm} N_{G}[x_{p}]=N_{G}[x_{p-1}]\\
     \{x_{p}\}\cup \{v\in I \mid v\in N_{G}(x_{p})\setminus N_{G}(x_{p-1})\},  &  if \hspace*{.2cm} p>1 \hspace*{.1cm} and \hspace*{.1cm} N_{G}[x_{p}]\neq N_{G}[x_{p-1}]
\end{cases}
\]
\end{theorem}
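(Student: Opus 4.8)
The plan is to treat the three displayed cases separately, in each case verifying that $D_r^*$ is a restrained dominating set and that nothing smaller works. Before splitting into cases I would record two structural consequences of the hypotheses. First, since $G$ is connected and $C$ is a clique, every $y_i\in I$ has a neighbour in $C$, say $x_j$, and then $y_i\in N_G(x_j)\subseteq N_G[x_j]\subseteq N_G[x_p]$; together with the fact that $x_p$ is adjacent to all of $C$, this shows that $x_p$ is a universal vertex of $G$. In particular $\{x_p\}$ dominates $V$, and $V$ itself is (vacuously) a restrained dominating set. Second, using the nesting of closed neighbourhoods: if $v\in I$ is adjacent to $x_j$, then $v\in N_G[x_j]\subseteq N_G[x_{j'}]$ and $v\neq x_{j'}$ for every $j'\ge j$, so the clique-neighbours of any $v\in I$ form a set of the form $\{x_i,x_{i+1},\dots,x_p\}$. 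Hence a vertex $v\in I$ is pendant exactly when its only neighbour is $x_p$, which for $p>1$ happens exactly when $v\in N_G(x_p)\setminus N_G(x_{p-1})$; thus the set $J:=\{v\in I\mid v\in N_G(x_p)\setminus N_G(x_{p-1})\}$ of the third case is precisely the set of pendant vertices of $G$ lying in $I$, and I would take the split partition so that $C$ itself contains no pendant vertex (harmless, since a pendant vertex of $C$ can always be shifted into $I$). By the observation that every pendant vertex is contained in every restrained dominating set, each restrained dominating set of $G$ contains $J$.

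For $p=1$, connectedness forces $G$ to be the star with centre $x_1$ and $q=n-1\ge 2$ leaves; all leaves are pendant, so they lie in every restrained dominating set, and $x_1$ cannot be left out, because its only neighbours are those leaves, which would then all be inside the set. So $V$ is the unique restrained dominating set and $D_r^*=V$ is trivially minimum. For $p>1$ with $N_G[x_p]=N_G[x_{p-1}]$, I would show that $\{x_p\}$ is a restrained dominating set, which is optimal since $\gamma_r(G)\ge 1$. Domination is already settled, so it remains to check that every $v\ne x_p$ has a neighbour other than $x_p$. If $v=x_j$ with $j<p$, this is clear when $p\ge 3$ (then $x_j$ has $p-1\ge 2$ clique-neighbours), and when $p=2$ the hypothesis gives $N_G[x_1]=N_G[x_2]=V$, so $x_1$ is universal of degree $n-1\ge 2$. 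If $v=y_i\in I$, then $y_i\in N_G(x_p)\subseteq N_G[x_p]=N_G[x_{p-1}]$ and $y_i\neq x_{p-1}$, so $y_i\in N_G(x_{p-1})$; thus $y_i$ is adjacent both to $x_p$ and to $x_{p-1}$.

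For $p>1$ with $N_G[x_p]\neq N_G[x_{p-1}]$, set $D_r^*=\{x_p\}\cup J$. For the upper bound, universality of $x_p$ gives domination, and for the restrained condition I would note that the vertices outside $D_r^*$ are $x_1,\dots,x_{p-1}$ and the members of $I\setminus J$. A vertex $x_j$ with $j<p$ has a neighbour in $\{x_1,\dots,x_{p-1}\}\setminus\{x_j\}$ when $p\ge 3$, and when $p=2$ the vertex $x_1$ is non-pendant, hence has a neighbour in $I$, which lies in $I\setminus J$ because $J\cap N_G(x_1)=\emptyset$; in either case the witness is outside $D_r^*$. A vertex $v\in I\setminus J$ is non-pendant, hence adjacent to at least two clique vertices, so by the nesting it is adjacent to $x_{p-1}$, and $x_{p-1}\notin D_r^*$. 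For the lower bound, any restrained dominating set $D_r$ contains $J$ and must dominate $x_{p-1}$; but no member of $J$ lies in $N_G[x_{p-1}]$ (such a vertex is adjacent only to $x_p$), so $D_r$ contains at least one vertex outside $J$, which gives $|D_r|\ge|J|+1=|D_r^*|$. Hence $D_r^*$ is minimum.

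The step I expect to be most delicate is the restrained condition for $D_r^*$ in the third case at the boundary $p=2$: this is exactly where the choice of split partition matters, since a pendant vertex concealed inside $C$ would make the displayed formula fail, so the argument has to lean on the normalization that $C$ carries no pendant vertex, together with the identification of $J$ with the pendant vertices of $I$ via the nesting of the closed neighbourhoods.
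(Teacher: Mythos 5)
Your proof is correct and follows essentially the same case analysis as the paper's: pendant vertices are forced into every restrained dominating set, the set in the third case is identified with the pendant vertices lying in $I$, the upper bound comes from exhibiting the set, and the lower bound from the fact that nothing in that set dominates $x_{p-1}$. The one substantive point where you go beyond the paper is your normalization that $C$ contains no pendant vertex together with the explicit check of the restrained condition at $p=2$: this is not cosmetic, since for a star $K_{1,n-1}$ with a leaf placed inside $C$ the displayed formula returns $V\setminus\{x_1\}$, which is not a restrained dominating set, and the paper's own assertion that $\{x_p\}\cup S$ is restrained silently fails there. So your version repairs a boundary case that the paper's statement and proof leave exposed.
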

\

\begin{proof}
We know that a restrained dominating set must contain all the pendant vertices of graph. If $|C|=1$, then the only
non-pendant vertex is the vertex in the partite set $C$, say $v_{c}$. Then $I$ must be contained in $D^{*}$, where $D^{*}$ is any restrained dominating set of $G$. Since $v_{c}$ is an isolated vertex in $G[V\setminus I]$, by the definition of restrained dominating set, the vertex $v_{c}$ should also belongs to $D^{*}$. Thus $V=\{v_{c}\}\cup I$ is the only restrained dominating set of $G$ in the case when $|C|=1$. Hence $D^{*}=V$.

If $|C|>1$ and $N_{G}[x_{p}]=N_{G}[x_{p-1}]$, then $D^{*}=\{x_{p}\}$ is a dominating set of $G$. Note that $D^{*}$ is also a restrained dominating set of $G$ since every vertex in $V\setminus D^{*}$ except $v_{p-1}$ is adjacent to the vertex $v_{p-1}$. Also $|D^{*}|=1$, and hence $D^{*}=\{x_{p}\}$ is a minimum restrained dominating set of $G$.

Now consider the case when $|C|>1$ and $N_{G}[x_{p}] \neq N_{G}[x_{p-1}]$. Let $D_{r}^{*}$ be a minimum restrained
dominating set of $G$. The set $S=N_{G}(x_{p})\setminus N_{G}(x_{p-1})$ can be dominated by the vertex $x_{p}$ or the set $S$ itself. Hence either $x_{p}\in D_{r}^{*}$ or $S\subseteq D_{r}^{*}$.

Since all the vertices of $S$ are isolated in $G[V\setminus \{x_{p}\}]$, if $x_{p}\in D_{r}^{*}$, then $S$ must be contained in $D_{r}^{*}$. Hence $\{x_{p}\}\cup S \subseteq D_{r}^{*}$. But since $\{x_{p}\}\cup S$ is a restrained dominating set of $G$, $D_{r}^{*}=\{x_{p}\}\cup S$.

If $S \subseteq D_{r}^{*}$, then since $S$ does not dominate the vertex $x_{p-1}$, $|D_{r}^{*}|\geq |S|+1$. Hence $\{x_{p}\}\cup S$ is a minimum restrained dominating set of $G$. This completes the proof of the theorem.
 \end{proof}

Given a split partition $(C,I)$ of threshold graph $G$ and an
ordering $(x_{1},x_{2},\ldots,x_{p})$ of the vertices of $C$ such
that $N_{G}[x_{1}]\subseteq N_{G}[x_{2}] \subseteq \cdots \subseteq
N_{G}[x_{p}]$, and  an ordering $(y_{1},y_{2},\ldots,y_{q})$ of the
vertices of $I$ such that $N_{G}(y_{1}) \supseteq N_{G}(y_{2})
\supseteq \cdots \supseteq N_{G}(y_{q})$, one can compute in
$O(n+m)$ time a minimum cardinality restrained dominating set of a
threshold graphs.

\section{Restrained domination in cographs}
\label{sec:8} In this section, we show that the \textsc{Minimum Restrained Domination} problem can also be solved in linear time for cographs,
which is a super class of threshold graphs. A \emph{cograph} is a graph without induced $P_{4}$~\cite{corneil}.
Various characterizations of cographs are known in literature.
\begin{theorem}
A graph is a cograph if and only if every induced subgraph $H$ is disconnected or the complement $\overline{H}$ is disconnected.
\end{theorem}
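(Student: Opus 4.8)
The plan is to prove the two directions separately, with essentially all the work in the ``only if'' direction. For the ``if'' direction I would argue by contraposition: if $G$ is not a cograph, then by definition it contains an induced $P_{4}$, and this $P_{4}$ is itself an induced subgraph $H$ with the property that \emph{neither} $H$ nor $\overline{H}$ is disconnected, since $\overline{P_{4}}\cong P_{4}$ and $P_{4}$ is connected. Hence the right-hand side of the equivalence fails, and this direction is finished with no case analysis.

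For the ``only if'' direction I would first use that being $P_{4}$-free is a hereditary property: every induced subgraph $H$ of a cograph $G$ is again $P_{4}$-free, so it suffices to show that a $P_{4}$-free graph $H$ on at least two vertices has $H$ disconnected or $\overline{H}$ disconnected. By contraposition this amounts to showing that a graph which is connected and whose complement is connected must contain an induced $P_{4}$, and I would prove this by induction on $|V(H)|$, the base case $|V(H)|=2$ being trivial. For the inductive step, fix a vertex $v$. If $H-v$ is disconnected, with components $C_{1},\dots,C_{k}$, $k\ge 2$, then I would show that if $v$ is nonadjacent to some $u\in C_{1}$, then choosing a neighbour $w$ of $v$ in $C_{1}$ and a neighbour $z$ of $v$ in another component, and a shortest (hence induced) $u$--$w$ path $u=p_{0},p_{1},\dots,p_{m}=w$ inside $C_{1}$, the four vertices $z,v,p_{i},p_{i-1}$ form an induced $P_{4}$, where $p_{i}$ is the first vertex on the path adjacent to $v$ (and $i\ge 1$ because $v\not\sim p_{0}=u$); since $H$ is $P_{4}$-free this is impossible, so $v$ is adjacent to every vertex of $H-v$, hence $v$ is isolated in $\overline{H}$ and $\overline{H}$ is disconnected. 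If instead $H-v$ is connected for \emph{every} $v$, then $H$ has no cut vertex; applying the inductive hypothesis to $H-v$ shows $\overline{H-v}$ is disconnected, i.e.\ $V(H)\setminus\{v\}$ splits as $P\cup Q$ with every $P$--$Q$ pair an edge of $H$, and a short case analysis on which vertices of $P\cup Q$ are adjacent to $v$ again yields either an induced $P_{4}$ (impossible) or the disconnectedness of $\overline{H}$.

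The step I expect to be the main obstacle is this last case, where $H$ is $2$-connected: one must extract the forbidden $P_{4}$ purely from the join structure of $H-v$ together with the adjacency pattern of $v$, and keeping track of which of the pairs inside $P$, inside $Q$, and between $v$ and $P\cup Q$ are edges is the only genuinely fiddly bookkeeping in the argument. An alternative that sidesteps it is to first establish the well-known recursive description of cographs --- namely that they are exactly the graphs obtained from $K_{1}$ by repeatedly taking disjoint unions and joins (see \cite{corneil}) --- after which the ``only if'' direction is immediate, since for a cograph $H$ on at least two vertices the last operation in its construction is either a disjoint union, making $H$ disconnected, or a join, making $\overline{H}$ disconnected. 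I would present the self-contained induction and then remark on this shorter route.
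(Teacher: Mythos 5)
The paper does not actually prove this statement: it is quoted as one of the ``various characterizations of cographs \ldots known in literature'' (it is Seinsche's theorem), so there is no in-paper argument to compare yours against. Your outline is the standard self-contained proof and is correct. The ``if'' direction via $\overline{P_4}\cong P_4$ is complete as written. In the ``only if'' direction, Case~1 of your induction is airtight (the quadruple $z,v,p_i,p_{i-1}$ is indeed an induced $P_4$, and connectivity of $H$ guarantees the required neighbours $w$ and $z$ exist). The one place you leave a genuine gap is the final ``$H-v$ is a join $P\ast Q$'' case, which you flag but do not execute; it does close, and here is the cleanest way: co-connectivity of $H$ gives non-neighbours $b_P\in P$ and $b_Q\in Q$ of $v$, and for any neighbour $a$ of $v$ and any non-neighbour $b$ of $v$ lying in the \emph{same} part, non-adjacency of $a$ and $b$ would make $b$--$c$--$a$--$v$ an induced $P_4$ (with $c$ the non-neighbour of $v$ in the other part), so every neighbour of $v$ is adjacent to every non-neighbour of $v$; but then in $\overline{H}$ the set $\{v\}\cup(V(H)\setminus N_H[v])$ sends no edges to $N_H(v)$, contradicting co-connectivity. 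Alternatively, since $P_4$ is self-complementary, the whole statement is invariant under complementation, so in the inductive step you may assume without loss of generality that $H-v$ is disconnected rather than co-disconnected; this makes your Case~2 unnecessary and reduces the proof to Case~1 alone. One last pedantic point: the theorem as stated in the paper is literally false for one-vertex induced subgraphs ($K_1$ and its complement are both connected), and your restriction to $|V(H)|\ge 2$ is the correct reading and should be stated explicitly.
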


A cograph has a tree decomposition which is called a cotree. A cotree is a pair $(T,f)$ comprising a rooted binary tree $T$ together with a bijection $f$ from from the vertices of the graph to the leaves of the tree. Each internal node of $T$ has a label $\otimes$ or $\oplus$. The operator $\otimes$ is called a join operation and it makes every vertex that is mapped to a leaf in the left subtree adjacent to every vertex that is mapped to leaf in the right subtree. The operator $\oplus$ is called union operation. In that case the graph is the union of the graphs defined by the left and right subtree. If n denotes the number of vertices in the graph, its cotree has $O(n)$ nodes. A cotree decomposition of a graph can also be obtained in linear-time~\cite{perl}. Suppose $I(G)$ denotes the number of isolated vertices in a graph $G$. The following result regarding the domination number of a cograph is already proved.
\begin{theorem}
\cite{poon}~If $G$ is a cograph with at least two vertices, then
\[\gamma(G)=
\begin{cases}
     \gamma(G_{1})+\gamma(G_{2}), &  if \hspace*{.2cm} G=G_{1} \oplus  G_{2} \\
     min\{\gamma(G_{1}),\gamma(G_{2}),2\},  &  if \hspace*{.2cm} G=G_{1} \otimes  G_{2}
\end{cases}
\]
\end{theorem}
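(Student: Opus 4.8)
The plan is to distinguish the two cograph operations and, in each, establish matching upper and lower bounds on $\gamma(G)$. By the characterization recalled above, a cograph with at least two vertices decomposes as either a disjoint union $G=G_1\oplus G_2$ or a join $G=G_1\otimes G_2$ of two nonempty induced subgraphs (take $\overline{H}$ disconnected to realize a join, $H$ disconnected to realize a union), so these two cases are exhaustive.

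For the union case I would first observe that $G=G_1\oplus G_2$ has no edge between $V(G_1)$ and $V(G_2)$; hence no vertex of $G_2$ can dominate a vertex of $G_1$, and conversely. Consequently, for any dominating set $D$ of $G$ the pieces $D\cap V(G_1)$ and $D\cap V(G_2)$ must individually dominate $G_1$ and $G_2$, which gives $|D|=|D\cap V(G_1)|+|D\cap V(G_2)|\ge\gamma(G_1)+\gamma(G_2)$. The reverse inequality is immediate: the union of a minimum dominating set of $G_1$ with one of $G_2$ is a dominating set of $G$ of size $\gamma(G_1)+\gamma(G_2)$. Hence $\gamma(G)=\gamma(G_1)+\gamma(G_2)$.

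For the join case I would prove the upper bound via two simple observations: (i) any (necessarily nonempty) dominating set of $G_1$ already dominates all of $G$, since every vertex of $G_2$ is adjacent to every vertex of $G_1$, so $\gamma(G)\le\gamma(G_1)$, and symmetrically $\gamma(G)\le\gamma(G_2)$; and (ii) for any $u\in V(G_1)$ and $v\in V(G_2)$, the set $\{u,v\}$ dominates $G$ because $u$ covers $V(G_2)$ and $v$ covers $V(G_1)$, so $\gamma(G)\le 2$. Thus $\gamma(G)\le\min\{\gamma(G_1),\gamma(G_2),2\}$. For the matching lower bound I would take a minimum dominating set $D$: if $|D|\ge 2$ then $\gamma(G)\ge 2\ge\min\{\gamma(G_1),\gamma(G_2),2\}$; if $D=\{w\}$ is a singleton, then $w$ lies in exactly one part, say $V(G_1)$, and $\{w\}$ dominates $V(G_1)$, forcing $\gamma(G_1)=1=|D|$ (and symmetrically if $w\in V(G_2)$). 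In every case $\gamma(G)\ge\min\{\gamma(G_1),\gamma(G_2),2\}$, and combining the bounds yields equality.

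I do not expect a genuine obstacle here; the formula is standard (it is the result of \cite{poon}) and the argument is short. The only spots requiring care are the singleton subcase in the join lower bound, and the remark that a dominating set of a graph with at least one vertex is nonempty, which is exactly what makes step (i) of the join upper bound valid.
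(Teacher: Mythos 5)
Your argument is correct: the union case follows because a dominating set must split into dominating sets of the two parts, and the join case is handled by the three upper bounds $\gamma(G)\le\gamma(G_1)$, $\gamma(G)\le\gamma(G_2)$, $\gamma(G)\le 2$ together with the observation that a singleton dominating set lying in $V(G_i)$ is already a dominating set of $G_i$. Note, however, that the paper does not prove this statement at all --- it imports it as a known result from the cited reference \cite{poon} --- so there is no in-paper proof to compare against; your write-up is the standard self-contained argument one would expect, and the points you flag as needing care (nonemptiness of a dominating set, and the singleton subcase) are exactly the right ones.
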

Now we are ready to prove the following theorem.
\begin{theorem}
If $G$ is a cograph with at least two vertices, then
\[\gamma_{r}(G)=
\begin{cases}
     \gamma_{r}(G_{1})+\gamma_{r}(G_{2}), &  if \hspace*{.2cm} G=G_{1} \oplus  G_{2} \\
      2,  &   if\hspace*{.2cm}  G=G_{1} \otimes G_{2}, |V(G_{1})|=1, |V(G_{2})|= 1\\
    min\{\gamma(G_{1}),\gamma(G_{2}),2\},           &   if\hspace*{.2cm}  G=G_{1} \otimes G_{2},\hspace*{.2cm} |V(G_{1})|\geq 2,\hspace*{.2cm} |V(G_{2})|\geq 2\\
    min\{1+I(G_{2}),\gamma(G_{2})\}             &   if \hspace*{.2cm} G=G_{1} \otimes G_{2}, |V(G_{1})|= 1,  |V(G_{2})|\geq 2\\
    min\{1+I(G_{1}),\gamma(G_{1})\}             &   if \hspace*{.2cm} G=G_{1} \otimes G_{2}, |V(G_{1})|\geq 2, |V(G_{2})|= 1
\end{cases}
\]

\end{theorem}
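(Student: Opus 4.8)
The plan is to do a case analysis matching the five cases in the statement, using the cotree decomposition $G=G_1 \oplus G_2$ or $G=G_1\otimes G_2$. Recall the key structural fact: in the join $G_1\otimes G_2$, every vertex of $G_1$ is adjacent to every vertex of $G_2$, so any single vertex together with one neighbor on the other side already dominates everything, and any two vertices straddling both sides form a dominating set. In the union $G_1\oplus G_2$ the two components are independent of each other, so a restrained dominating set of $G$ is exactly the disjoint union of restrained dominating sets of $G_1$ and $G_2$; this immediately gives $\gamma_r(G)=\gamma_r(G_1)+\gamma_r(G_2)$, handling the first case. (One should note the standing assumption that graphs are connected with at least two vertices is relaxed here inside the recursion, so the base cases of single-vertex or edgeless components must be interpreted with the convention that an isolated vertex must lie in any restrained dominating set, as in Observation~2.1.)

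For the join cases the argument splits on the sizes of $V(G_1)$ and $V(G_2)$. If both have exactly one vertex, then $G=K_2$, whose only restrained dominating set is $V(G)$ itself (each endpoint is a pendant vertex), giving $\gamma_r(G)=2$. If both sides have at least two vertices, I would argue that $\gamma_r(G)=\min\{\gamma(G_1),\gamma(G_2),2\}$: any dominating set $D$ of $G$ with $|D|\le 2$ that uses vertices from both parts, or a dominating set of one $G_i$ of size $\le 2$, is automatically restrained because every vertex outside $D$ has a non-$D$ neighbor (since each $G_i$ has $\ge 2$ vertices, pick any other vertex of the opposite part or, for a two-element dominating set inside one part, any vertex of the other part — there are no pendant obstructions). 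Conversely $\gamma_r(G)\ge\gamma(G)$ and the cograph domination formula of Theorem~8.3 gives the matching lower bound; I need to check that the minimum domination witness can always be taken to be restrained, which follows because $\gamma(G)\le 2$ in the join of two large parts.

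The delicate case — and the main obstacle — is the asymmetric one, say $|V(G_1)|=1$, $V(G_1)=\{u\}$, and $|V(G_2)|\ge 2$. Here $u$ is adjacent to all of $G_2$, but some vertices of $G_2$ may be isolated within $G_2$, and such a vertex $v$ has $u$ as its only neighbor in $G$, so if $v\notin D$ then $v$'s restrained condition forces a non-$D$ neighbor of $v$ other than possibly $u$ — impossible unless $u\notin D$. The plan is: either $u\in D$, in which case every isolated-in-$G_2$ vertex must itself be in $D$ (there are $I(G_2)$ of them) and $u$ dominates everything else, and one checks $\{u\}\cup\{\text{isolated vertices of }G_2\}$ is genuinely restrained (the non-isolated vertices of $G_2$ each have a neighbor inside $G_2$, hence outside $D$), yielding a set of size $1+I(G_2)$; or $u\notin D$, in which case $D\cap V(G_2)$ must dominate all of $G_2$ including nothing is forced by $u$, and the restrained condition on vertices of $G_2$ outside $D$ is met by $u\notin D$, so $D$ can be taken to be any minimum dominating set of $G_2$, giving $\gamma(G_2)$. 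Taking the minimum of the two options gives $\min\{1+I(G_2),\gamma(G_2)\}$, and the symmetric subcase is identical with the roles of $G_1,G_2$ swapped. The careful bookkeeping is verifying in each subcase that the claimed witness set actually satisfies \emph{both} the domination and the restrained conditions, and that no smaller set can — in particular that when $u\notin D$ one truly recovers an unconstrained minimum dominating set of $G_2$ with the restrained condition coming for free.
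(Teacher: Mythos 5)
Your argument follows the paper's own proof essentially step for step: the union case by disjointness of the components, $K_2$ for the join of two singletons, a straddling pair together with the cograph domination formula when both parts have at least two vertices, and the dichotomy ``$u\in D$ versus $u\notin D$'' for the asymmetric join. Where you add detail (checking that $\{u\}\cup\{\text{isolated vertices of }G_2\}$ really is restrained because every non-isolated vertex of $G_2$ keeps a non-$D$ neighbour inside $G_2$), the extra care is correct and in fact goes slightly beyond what the paper writes.

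There is, however, a genuine gap in the subcase $u\notin D$ of the asymmetric join --- and it is the same gap present in the paper's proof, so the comparison does not rescue you. You verify the restrained condition only for vertices of $G_2$ outside $D$ (they see $u\notin D$), but not for $u$ itself: since $u\notin D$ and all neighbours of $u$ lie in $V(G_2)$, the restrained condition at $u$ forces $D\subsetneq V(G_2)$. So $D$ cannot be ``any minimum dominating set of $G_2$''; it must omit some vertex of $G_2$, which is impossible precisely when $G_2$ is edgeless, i.e.\ when $G$ is a star $K_{1,k}$ with $k=|V(G_2)|=I(G_2)=\gamma(G_2)\ge 2$. There the stated formula returns $\min\{1+k,\,k\}=k$, whereas $\gamma_r(K_{1,k})=k+1$: all leaves are pendant and hence lie in $D$ by Observation~2.1, and then the centre has no neighbour outside $D$ and must join $D$ as well (this also agrees with the paper's threshold-graph theorem in the case $p=1$). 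So the $u\notin D$ branch needs the extra condition $D\neq V(G_2)$, and the displayed formula is only valid when $G_2$ has an edge; a complete proof has to treat the edgeless-$G_2$ (star) case separately. A further minor slip, harmless to the result: in the case $|V(G_1)|,|V(G_2)|\ge 2$ a two-element dominating set contained in a single part need not be restrained (take $G=C_4=\overline{K_2}\otimes\overline{K_2}$ and $D=V(G_1)$); the upper bound $2$ should be witnessed only by the straddling pair.
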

\begin{proof}
 For the case $G=G_{1} \oplus  G_{2}$, clearly $\gamma_{r}(G)=\gamma_{r}(G_{1})+\gamma_{r}(G_{2})$, since no vertex of $G_{1}$ is adjacent to any vertex of $G_{2}$.

When $G=G_{1} \otimes G_{2}$ and $|V(G_{1})|=|V(G_{2})|= 1$, then $G=K_{2}$ and hence $\gamma_{r}(G)=2$.

Now consider the case when $G=G_{1} \otimes G_{2},$ $|V(G_{1})|\geq 2$, $|V(G_{2})|\geq 2$. Here $\{x,y\}$ is a restrained dominating set of $G$ where $x\in G_{1}$, $y\in G_{2}$. If there exists a vertex $x\in G_{1}$ which dominates all the vertices of $G_{1}$, then $\{x\}$ is a restrained dominating set of $G_{1}$. Similarly if there exists a vertex $y\in G_{2}$ which dominates all the vertices of $G_{2}$, then $\{y\}$ is a restrained dominating set of $G_{2}$. This proves the formula in this case.

Now consider the case when $G=G_{1} \otimes G_{2},$ $|V(G_{1})|=1$, $|V(G_{2})|\geq 2$. Let $V(G_{1})=\{v\}$.
Suppose $D_{r}^{*}$ is a minimum restrained dominating set of $G$. If $v\in D_{r}^{*}$, then $D_{r}^{*}=\{v\}\cup I(G_{2})$. If $v\notin D_{r}^{*}$, then $D_{r}^{*}$ is a minimum dominating set of $G_{2}$. This proves the formula in this case.

The formula for the case $G=G_{1} \otimes G_{2},$ $|V(G_{1})|\geq 2$, $|V(G_{2})|=1$ can be proved similarly.
\end{proof}

Using the above theorem, a minimum restrained dominating set of a
co-graph can be computed in $O(n+m)$ time.

\section{Restrained domination in chain graphs}
\label{sec:9}
A bipartite graph $G = (X, Y,E)$ is called a \emph{chain graph} if the
neighborhoods of the vertices of $X$ form a chain, that is, the
vertices of $X$ can be linearly ordered, say $x_1,x_2,\ldots,x_p$,
such that $N_G(x_1) \subseteq N_G(x_2) \subseteq \ldots \subseteq
N_G(x_p)$.  If $G = (X, Y,E)$ is a chain graph, then the
neighborhoods of the vertices of $Y$ also form a chain \cite{yan}.
An ordering $\alpha=(x_1, x_2,\ldots,x_p,y_1,y_2,\ldots,y_q)$ of $X
\cup Y$ is called a chain ordering if $N_G(x_{1})\subseteq
N_G(x_{2})\subseteq \cdots\subseteq N_G(x_{p})$ and
$N_G(y_{1})\supseteq N_G(y_{2})\supseteq \cdots \supseteq
N_G(y_{q})$. It is well known that every chain graph admits a chain
ordering \cite{yan,kloks}.

\begin{lemma}\label{lem1}
Let $G$ be a connected chain graph as defined above and $v$ be a pendant vertex of $G$. If $v\in Y$, then $v$ is  adjacent to $x_{p}$ and if $v\in X$, then $v$ is  adjacent to $y_{1}$.
\end{lemma}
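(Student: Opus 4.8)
The plan is to unpack the chain ordering $\alpha=(x_1,\ldots,x_p,y_1,\ldots,y_q)$ and argue directly about where a pendant vertex can sit. First I would recall the two defining chain conditions: $N_G(x_1)\subseteq N_G(x_2)\subseteq\cdots\subseteq N_G(x_p)$ and $N_G(y_1)\supseteq N_G(y_2)\supseteq\cdots\supseteq N_G(y_q)$. Since $G$ is connected and has at least two vertices, both $X$ and $Y$ are nonempty, and $N_G(x_p)$ is the largest neighborhood among vertices of $X$ while $N_G(y_1)$ is the largest among vertices of $Y$; in particular connectivity forces $x_p$ to be adjacent to every vertex of $Y$ and $y_1$ to be adjacent to every vertex of $X$ (otherwise some vertex would be isolated, contradicting connectedness of a graph on $\ge 2$ vertices).

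Next I would handle the two cases. Suppose $v\in Y$ is a pendant vertex, say $v=y_j$. By the observation above, $x_p$ is adjacent to every vertex of $Y$, hence $x_p\in N_G(y_j)$; since $d_G(y_j)=1$ this forces $N_G(y_j)=\{x_p\}$, so $v$ is adjacent to $x_p$. Symmetrically, if $v\in X$ is pendant, say $v=x_i$, then $y_1$ is adjacent to every vertex of $X$, so $y_1\in N_G(x_i)$, and $d_G(x_i)=1$ gives $N_G(x_i)=\{y_1\}$, so $v$ is adjacent to $y_1$. That completes both cases.

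The only real point requiring care is the claim that $x_p$ is adjacent to all of $Y$ (and dually $y_1$ to all of $X$). I would justify it as follows: if some $y_j$ were not adjacent to $x_p$, then by the nesting $N_G(x_1)\subseteq\cdots\subseteq N_G(x_p)$, $y_j$ would not be adjacent to any $x_i$, making $y_j$ isolated, contradicting that $G$ is connected with at least two vertices. The same argument with the $Y$-chain gives the statement about $y_1$. Everything else is immediate from the definition of a pendant vertex, so this is the crux — and it is short.
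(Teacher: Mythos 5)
Your proof is correct and follows the same route as the paper: the paper's one-line argument is exactly that every vertex of $Y$ is adjacent to $x_p$ and every vertex of $X$ is adjacent to $y_1$, from which the claim about pendant vertices is immediate. You merely make explicit the connectivity argument (a $y_j$ not adjacent to $x_p$ would, by the nesting of the $N_G(x_i)$, be isolated) that the paper leaves implicit under ``by the definition of chain ordering.''
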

\begin{proof}
By the definition of chain ordering, every vertex $y$ in $Y$ is adjacent to $x_{p}$, similarly every vertex $x$ in $X$ is adjacent to $y_{1}$. Hence the lemma is proved.
\end{proof}

\begin{lemma}\label{lem2}
Let $G$ ba a chain graph with at least three vertices. If every non-pendant vertex of $G$ is adjacent to
a pendant vertex as well as a non-pendant vertex, then $G$ is a bi-star.
\end{lemma}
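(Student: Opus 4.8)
The plan is to use the chain ordering $\alpha=(x_1,\ldots,x_p,y_1,\ldots,y_q)$ together with Lemma~\ref{lem1} and to identify which vertices are pendant and which are not. First I would use the hypothesis to locate the non-pendant vertices. Since $G$ is connected with at least three vertices, there is at least one non-pendant vertex; let $x$ be a non-pendant vertex of largest index in $X$ (if any exists) and likewise consider $y_1$, which is the vertex of $Y$ with the largest neighborhood. By Lemma~\ref{lem1}, every vertex of $X$ is adjacent to $y_1$ and every vertex of $Y$ is adjacent to $x_p$; this immediately makes $x_p$ and $y_1$ natural candidates for the two centers of the bi-star. The key observation I would establish is that the set of non-pendant vertices is contained in $\{x_p, y_1\}$: if some $x_i$ with $i<p$ were non-pendant, it would have at least two neighbors in $Y$, but then by the hypothesis it must be adjacent to a non-pendant vertex of $Y$, and I would push on the chain structure to show this forces $x_i$'s neighborhood to coincide with $x_p$'s in a way that contradicts either the hypothesis applied to $x_p$ or the connectivity/size assumptions.

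Next I would argue symmetrically for $Y$, concluding that every non-pendant vertex lies in $\{x_p,y_1\}$. Then I would show $x_p$ and $y_1$ are both non-pendant: if $x_p$ were pendant, then by Lemma~\ref{lem1} combined with the fact that all of $Y$ is adjacent to $x_p$, we would get $|Y|=1$, and a short case analysis on such a small graph (a star) either contradicts the hypothesis that every non-pendant vertex is adjacent to a non-pendant vertex, or collapses to fewer than three vertices. With $x_p,y_1$ the only non-pendant vertices, every other vertex of $X$ is pendant and hence (by Lemma~\ref{lem1}) adjacent only to $y_1$, and every other vertex of $Y$ is pendant and adjacent only to $x_p$; finally $x_p y_1\in E$ because $x_p\in N_G(y_1)$. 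This is exactly the structure of a bi-star: two adjacent centers $x_p,y_1$, with all remaining vertices being leaves attached to one of the two centers. I would also need to rule out the degenerate possibility that one of $X\setminus\{x_p\}$ or $Y\setminus\{y_1\}$ is empty while the other is nonempty, checking that the "every non-pendant vertex is adjacent to a pendant vertex" clause of the hypothesis still forces the bi-star shape (it forces each center to have at least one leaf, or else that center is itself pendant).

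The main obstacle I expect is the core claim that no $x_i$ with $i<p$ can be non-pendant: this is where the chain condition, Lemma~\ref{lem1}, and the two halves of the hypothesis ("adjacent to a pendant vertex" and "adjacent to a non-pendant vertex") all have to be combined carefully, and one must be careful about small cases ($p=1$, $q=1$, or $|V(G)|=3$) where the implications degenerate. Everything after that claim is essentially bookkeeping with the chain ordering, but getting the exclusion argument airtight — in particular handling the case where a non-pendant $x_i$ shares its neighborhood with $x_p$ — is the delicate step.
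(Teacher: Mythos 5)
Your architecture is the right one and matches the paper's: show that the set of non-pendant vertices is contained in $\{x_p,y_1\}$, then that there are exactly two non-pendant vertices which are adjacent, and read off the bi-star. The problem is that the one substantive step --- the containment of the non-pendant vertices in $\{x_p,y_1\}$ --- is precisely the step you leave open, and the route you sketch for it is aimed at the wrong half of the hypothesis. You propose to take a non-pendant $x_i$ with $i<p$, invoke the clause ``adjacent to a \emph{non-pendant} vertex of $Y$,'' and force $N_G(x_i)=N_G(x_p)$ into a contradiction. That clause cannot carry the argument: in $K_{2,2}$ (a connected chain graph on four vertices) every vertex is non-pendant and adjacent to a non-pendant vertex, and $N_G(x_1)=N_G(x_2)$ causes no contradiction --- the hypothesis of the lemma fails there only because of the \emph{pendant}-neighbor clause. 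That other clause is what makes the step immediate: by Lemma~\ref{lem1} a pendant vertex of $Y$ has $x_p$ as its unique neighbor and a pendant vertex of $X$ has $y_1$ as its unique neighbor, so the only vertices of $G$ that can be adjacent to any pendant vertex at all are $x_p$ and $y_1$; since the hypothesis puts a pendant neighbor next to every non-pendant vertex, the non-pendant vertices lie in $\{x_p,y_1\}$. This is the paper's one-line argument, and it needs no case analysis on coinciding neighborhoods.

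Once that containment is in hand, the rest of your plan is sound and is essentially what the paper does: a connected graph on at least three vertices has a non-pendant vertex, which by hypothesis must have a non-pendant neighbor, so there are exactly two non-pendant vertices, they are $x_p$ and $y_1$, they are adjacent, every other vertex is a leaf hanging off one of them, and $G$ is a bi-star. So the verdict is: correct skeleton, but the core exclusion claim is asserted rather than proved, and the strategy you earmark for it would not close without switching to the pendant-neighbor clause.
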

\begin{proof}
If a vertex $v$ in $G$ is adjacent to a pendant vertex then either $v=x_{p}$ or $v=y_{1}$. Since every non-pendant vertex of $G$
is adjacent to a pendant vertex, $G$ has at most two non-pendant vertices. To show that $G$ is a bi-star, we need to show that
$G$ has exactly two non-pendant vertices. Clearly $G$ has at least one non-pendant vertex, say $v$. By the statement of lemma, $v$ must have
an adjacent non-pendant vertex. Therefore $G$ contains exactly two non-pendant vertices.
\end{proof}

\begin{theorem}\label{chain}
Let $G=(X,Y,E)$ be a connected chain graph having at least three vertices and
$\alpha=(x_1,x_2,\ldots,x_p,y_1,y_2, \ldots,y_q)$ is chain ordering
of $X \cup Y$.  Then $t \leq \gamma_{c}(G) \leq t+2$, where $t$ denotes
the number of pendant vertices of $G$. Furthermore, the following
are true.
\begin{enumerate}
  \item[(a)] $\gamma_{r}(G) = t$ if and only if $G=K_{2}$ or bi-star.
  \item[(b)] Let $P$ denotes the set of all pendant vertices of $G$ and $P_{A}$ denotes
  the set of vertices adjacent to the vertices of $P$. Then $\gamma_{r}(G)=t+1$ if and only if either $G$ or $G'=G[(X\cup Y)\setminus (P\cup P_{A})]$ is a star.
  \item[(c)] If $G$ is a graph other than the graphs described in the above statements then $\gamma_{r}(G)=t+2$.
\end{enumerate}
\end{theorem}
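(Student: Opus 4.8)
The plan is to first pin down the range $t \le \gamma_{r}(G) \le t+2$, then note that equality $\gamma_{r}(G)=t$ is governed by Lemma~\ref{lem2}, and finally decide which of the two remaining values is attained by analysing restrained dominating sets of the form $P\cup\{w\}$. For the lower bound, Observation~2.1 gives $P\subseteq D$ for every restrained dominating set $D$, so $\gamma_{r}(G)\ge|P|=t$. For the upper bound I would take $D_{0}=P\cup\{x_{p},y_{1}\}$. Since $G$ is connected, no vertex of $Y$ can be non-adjacent to all of $X$, so $x_{p}$, which has the largest neighbourhood among vertices of $X$, is adjacent to all of $Y$; symmetrically $y_{1}$ is adjacent to all of $X$, whence $D_{0}$ dominates $V$. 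If $v\notin D_{0}$ then $v$ is non-pendant with $\deg_{G}(v)\ge 2$ and $v\notin\{x_{p},y_{1}\}$; for $v\in X$, by Lemma~\ref{lem1} the pendant vertices of $Y$ are adjacent only to $x_{p}\ne v$, so $v$ has at most one neighbour in $D_{0}$ (namely $y_{1}$) and therefore a neighbour outside $D_{0}$, and symmetrically for $v\in Y$. Hence $\gamma_{r}(G)\le t+2$.

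For part (a): $\gamma_{r}(G)=t$ forces $P$ itself to be a restrained dominating set, which happens exactly when every non-pendant vertex is adjacent both to a pendant vertex (domination) and to a non-pendant vertex (the restrained condition); by Lemma~\ref{lem2} this holds only for a bi-star, and a bi-star obviously has the property, giving the stated equivalence (with $K_{2}$ as the degenerate extra case falling outside the hypothesis $|V|\ge 3$).

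For parts (b) and (c), the preceding already shows $\gamma_{r}(G)\in\{t,t+1,t+2\}$ with $\gamma_{r}(G)=t$ iff $G$ is a bi-star, so it remains to decide when $\gamma_{r}(G)=t+1$; by Observation~2.1 this means $G$ is not a bi-star and $D=P\cup\{w\}$ is a restrained dominating set for some non-pendant $w$. The key remark is that when $G$ is not a star we have $|X|,|Y|\ge 2$, so $x_{p},y_{1}$ are non-pendant and each is adjacent to every non-pendant vertex on the opposite side; hence if $w\notin\{x_{p},y_{1}\}$ the restrained condition for $D$ holds automatically (each non-pendant $u\ne w$ keeps $x_{p}$ or $y_{1}$ as a non-pendant neighbour different from $w$), and only domination remains. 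A non-pendant vertex is adjacent to a pendant only if it is $x_{p}$ (when $Y$ has a pendant) or $y_{1}$ (when $X$ has a pendant); feeding this into the domination requirement and using the chain ordering forces, say for $w\in X$, that $x_{p}$ and $w$ are the only non-pendant vertices of $X$ (so $w=x_{p-1}$) and that $w$ is adjacent to all non-pendant vertices of $Y$ except possibly $y_{1}$ — and then $G'=G[(X\cup Y)\setminus(P\cup P_{A})]$ is precisely the star centred at $w$. The choices $w=x_{p}$ or $w=y_{1}$ are impossible when $G$ is not a star (each would make all of $Y$, resp.\ all of $X$, pendant). Conversely, if $G$ is a star then $P\cup\{\text{centre}\}=V$ is its only restrained dominating set, of size $t+1$; and if $G'$ is a star one can choose its centre $c'$ so that $P\cup\{c'\}$ is a restrained dominating set of size $t+1$ (domination is clear, and the restrained condition holds since $x_{p}$ or $y_{1}$ covers the remaining non-pendant vertices). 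Since neither a star nor a $G$ with $G'$ a star is a bi-star, this proves (b); and (c) is immediate, because every other $G$ must realise the last value $t+2$.

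The main obstacle will be the structural step in (b): extracting from ``$P\cup\{w\}$ is a restrained dominating set'' the exact conclusion about the chain ordering, and — most delicately — ruling out the degenerate possibilities in which $G'$ would shrink to a single vertex or to an edgeless graph. The fact that makes the case analysis close cleanly is a monotonicity observation about chain graphs: a non-pendant vertex $x_{i}$ with $i<p$ cannot be adjacent to any pendant vertex of $Y$, so its $\ge 2$ neighbours force at least two non-pendant vertices in $Y$ (and symmetrically). This is exactly what prevents the pathological configurations, so that whenever the hypotheses of (b) hold the set $G'$ is a genuine star $K_{1,s}$ with $s\ge 1$ and the constructions above go through.
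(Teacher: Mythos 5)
Your proposal is correct and follows essentially the same route as the paper: the lower bound from Observation 2.1, the witness set $P\cup\{x_{p},y_{1}\}$ combined with Lemma~\ref{lem1} for the upper bound, Lemma~\ref{lem2} for part (a), and for part (b) the same dichotomy between ``$P$ alone dominates'' (forcing $G$ to be a star) and ``one extra non-pendant vertex $w$ is needed'' (forcing $G'$ to be a star). Your analysis of which $w$ can work in (b) is more explicit than the paper's (which only argues that the added vertex must dominate $V(G')$), but the underlying argument is the same.
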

\begin{proof}
Since every restrained dominating set contains all the pendant vertices of graph,  $\gamma_{r}(G)\geq t$.
Now suppose that $P$ denotes the set of all pendant vertices of graph $G$. Then the set $D=P\cup \{x_{p},y_{1}\}$ is a
dominating set of $G$. We claim that $D$ is also a restrained dominating set of $G$. If not, then there exists a vertex
$v\in (X\cup Y)\setminus D$ such that $N_{G}(v)\subseteq D$. Since $|N_{G}(v)| \geq 2$, at least one neighbor of $v$ is
a pendant vertex. But then by Lemma~\ref{lem1}, $v$ is either $x_{p}$ or $y_{1}$, which is a contradiction.
Hence $D=P\cup \{x_{p},y_{1}\}$ is a restrained dominating set of $G$. This proves that $\gamma_{r}(G)\leq t+2$.

 $(a)$ If $G=K_{2}$ or bi-star, then $D=P$ is a restrained dominating set of $G$. Hence $\gamma_{r}(G)=t$.

Conversely suppose that $\gamma_{r}(G)=t$. Let $D$ be the minimum restrained dominating set of $G$. Then $D$ is exactly the set of all pendant vertices of $G$. Then we have two possibilities for graph $G$.

\noindent (i) All the vertices of graph $G$ are pendant vertices. Then $G=K_{2}$.

\noindent (ii) Every non-pendant vertex is adjacent to a pendant vertex as well as a non-pendant vertex. By Lemma~\ref{lem2}, chain graph satisfying this property is bi-star.

$(b)$ If $G$ is star, then clearly $\gamma_{r}(G)=|X\cup Y|=t+1$. If $G'$ is star with star center $v$, then $D=P\cup \{v\}$ is a dominating
set of $G$. Also $x_{p},y_{1}\notin D$, and every vertex $u$ not in $D$ is either adjacent to $x_{p}$ or $y_{1}$. Hence $D$ is also a restrained
dominating set of $G$ and $\gamma_{r}(G)=t+1$.

Conversely suppose that $\gamma_{r}(G)=t+1$. Let $D$ be the minimum restrained dominating set of $G$. Then $P\subseteq D$. Now we have two possibilities for graph $G$.

\noindent (i) The set $P$ dominates $X\cup Y$. But there exists a vertex $v\in (X \cup Y)\setminus P$ such that $N_{G}(v)\subseteq D$. Thus $v$ is a non-pendant vertex in the graph $G$ and all the neighbors of $V$ are pendant. This is possible only when $v$ is the only non-pendant vertex in the graph $G$. Hence $G$ is a star in this case.

\noindent (ii) The set $P$ does not dominate $X\cup Y$, but there exist a non-pendant vertex $v$ such that $P\cup\{v\}$ dominates all the vertices of the graph. Hence $v$ dominates the set $(X\cup Y)\setminus (P \cup P_{A})$. This is possible only when $G[(X\cup Y)\setminus (P \cup P_{A})]$ is a star.

$(c)$ Proof directly follows from above statements.
\end{proof}

Now we are ready to prove the following theorem:
\begin{theorem}
A minimum restrained dominating set of a chain graph can be computed in $O(n+m)$ time.
\end{theorem}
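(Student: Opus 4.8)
The plan is to turn the structural characterization of Theorem~\ref{chain} into a constructive linear-time procedure. First I would compute a chain ordering $\alpha=(x_1,\ldots,x_p,y_1,\ldots,y_q)$ of $X\cup Y$; this can be done in $O(n+m)$ time by the cited results on chain graphs (sort the two sides by neighborhood size, which by the chain property coincides with sorting by degree). Simultaneously I would record the degree of every vertex, so that the set $P$ of pendant vertices is available in linear time, and by Lemma~\ref{lem1} the set $P_A$ of vertices adjacent to a pendant vertex is contained in $\{x_p,y_1\}$ and is likewise identified in $O(1)$ extra time once $\alpha$ and the degrees are known. Since every restrained dominating set contains $P$ (Observation~2.1), the only freedom is deciding which of $x_p,y_1$ (or an interior vertex) to add, and Theorem~\ref{chain} tells us exactly which case occurs.

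Next I would branch according to the three cases of Theorem~\ref{chain}, each of which reduces to a simple linear-time structural test. Case (a): test whether $G=K_2$ or $G$ is a bi-star; a connected chain graph is a bi-star precisely when it has exactly two non-pendant vertices and those two are adjacent — checkable by scanning degrees and one adjacency query. If so, output $D_r^\ast = P$. Case (b): otherwise test whether $G$ itself is a star (exactly one non-pendant vertex) or whether $G'=G[(X\cup Y)\setminus(P\cup P_A)]$ is a star; in either subcase output $P\cup\{v\}$ where $v$ is the star center (of $G$ in the first subcase, of $G'$ in the second). Case (c): in all remaining cases output $P\cup\{x_p,y_1\}$, which the proof of Theorem~\ref{chain} already shows is a restrained dominating set. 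In every branch the set is produced explicitly, and by Theorem~\ref{chain} it has the minimum possible cardinality, so correctness is immediate from that theorem.

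The one step that needs a little care — and which I expect to be the main obstacle — is forming $G'$ and deciding in linear time whether it is a star, since $G'$ is obtained by deleting $P\cup P_A$ and one should not rebuild adjacency lists from scratch naively. The observation that makes this cheap is that $P\cup P_A\subseteq P\cup\{x_p,y_1\}$, so $G'$ is just $G$ with the pendant vertices and at most the two vertices $x_p,y_1$ removed; its vertex set and induced degree sequence can be read off from $\alpha$ and the original degrees in $O(n)$ time (a vertex of $X\setminus P$ loses exactly the neighbor $y_1$ if $y_1\in P_A$, and symmetrically for $Y\setminus P$). Then $G'$ is a star iff it is connected with exactly one vertex of induced degree $\geq 2$ (with the usual small-case conventions for $K_{1,1}$ and isolated vertices handled directly); connectivity of $G'$ and the presence of a unique high-degree vertex are both checkable in $O(n+m)$.

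Finally I would tally the running time: computing $\alpha$ and the degrees is $O(n+m)$; identifying $P$, $P_A$, and testing the $K_2$/bi-star/star conditions on $G$ is $O(n+m)$; constructing $G'$ and testing whether it is a star is $O(n+m)$; and emitting the output set is $O(n)$. Hence the total is $O(n+m)$, which proves the theorem.
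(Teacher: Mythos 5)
Your proposal is correct and follows essentially the same route as the paper: compute a chain ordering and the pendant set $P$, then run through the case analysis of Theorem~\ref{chain} ($K_2$/bi-star, star, $G'$ a star, otherwise $P\cup\{x_p,y_1\}$), each case being a linear-time structural test. Your additional observations — that $P_A\subseteq\{x_p,y_1\}$ by Lemma~\ref{lem1} and that $G'$ can therefore be formed and tested without rebuilding adjacency lists — are correct refinements of implementation details the paper leaves implicit.
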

\begin{proof}
For a chain graph $G=(X,Y,E)$, a chain ordering $\alpha=(x_1,x_2,\ldots,x_p,y_1,y_2, \ldots,y_q)$ of $X\cup Y$ can be computed in linear time.
The set $P$ of all pendant vertices of $G$ can also be computed in linear time. Now, if $|X\cup Y|=2$, then take $D=X\cup Y$. It can also be computed in
linear time whether $G$ is a star or bistar. If $G$ is a bistar, then take $D=P$. If $G$ is a star with star center $v$, then take
$D=P\cup \{v\}$. Let $S$ denote
the set of vertices adjacent to a pendant vertex of $G$. Then the set $S$ can also be computed in linear time.
If $G'=G[(X \cup Y)\setminus (P \cup S)]$ is a star with star center $u$, then $D=P\cup \{u\}$. Otherwise, take $D=P\cup \{x_{p},y_{1}\}$. By
Theorem~\ref{chain}, $D$ is minimum cardinality restrained dominating set of $G$. Hence, the theorem is proved.
\end{proof}

\section{An upper bound for the restrained domination number}
\label{sec:10}
Zverovich and Pohosyan~\cite{zverovich} proved the following result.
\begin{theorem}
If a graph $G$ with $n$ vertices and minimum degree $\delta$ has a perfect matching, then
$$\gamma_{r}(G)\leq \frac{2(1+\ln(\delta+1))}{\delta+1}n+\epsilon$$  where $\epsilon=0$ if $n$ is even and $\epsilon=1$ otherwise.
\end{theorem}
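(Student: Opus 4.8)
The plan is to deduce the bound from the classical probabilistic estimate on the ordinary domination number and then exploit the perfect matching to turn an arbitrary small dominating set into a restrained one, at the cost of a single factor of $2$. The target constant $\frac{2(1+\ln(\delta+1))}{\delta+1}$ is exactly twice the Arnautov--Payan bound $\frac{1+\ln(\delta+1)}{\delta+1}$ for $\gamma(G)$, which is the signal that this two-step strategy should work.

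First I would (re)prove the Arnautov--Payan bound: any graph $G$ with $n$ vertices and minimum degree $\delta$ has $\gamma(G)\le \frac{1+\ln(\delta+1)}{\delta+1}\,n$. Set $p=\frac{\ln(\delta+1)}{\delta+1}$, form a random set $A$ that contains each vertex independently with probability $p$, and note that $D:=A\cup\bigl(V\setminus N[A]\bigr)$ is a dominating set. Its expected cardinality is at most $pn+\sum_{v}\Pr[N[v]\cap A=\emptyset]\le \bigl(p+(1-p)^{\delta+1}\bigr)n\le\bigl(p+e^{-p(\delta+1)}\bigr)n=\frac{1+\ln(\delta+1)}{\delta+1}\,n$, so a dominating set $D$ of at most this size exists; fix one such $D$.

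Next let $M$ be a perfect matching of $G$ and, for $v\in V$, let $v^{*}$ denote the partner of $v$ under $M$. Define $D_{r}=D\cup\{v^{*}:v\in D\}$, so that $|D_{r}|\le 2|D|\le \frac{2(1+\ln(\delta+1))}{\delta+1}\,n$. I would then check that $D_{r}$ is a restrained dominating set. The map $v\mapsto v^{*}$ is an involution and a bijection of $V$, so it commutes with complementation and fixes $D\cup\{v^{*}:v\in D\}$; hence $V\setminus D_{r}$ is closed under taking matching partners. Therefore, for every $u\notin D_{r}$ the partner $u^{*}$ also lies in $V\setminus D_{r}$, and since $uu^{*}\in E(G)$ the vertex $u$ has a neighbour outside $D_{r}$; and since $u\notin D$ and $D$ dominates $G$, the vertex $u$ has a neighbour in $D\subseteq D_{r}$. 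Thus $D_{r}$ is restrained. Because $G$ has a perfect matching, $n$ is even and $\epsilon=0$, matching the statement (the clause $\epsilon=1$ for odd $n$ only becomes relevant if one weakens the hypothesis to a near-perfect matching, in which case the lone unmatched vertex must be thrown into $D_{r}$).

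I do not anticipate a genuine obstacle: once the set $D_{r}=D\cup\{v^{*}:v\in D\}$ is written down the verification is bookkeeping. The one delicate point, and the place where the matching hypothesis is actually used, is the claim that $V\setminus D_{r}$ is partner-closed; this relies on each vertex having a \emph{unique} partner and on $v\mapsto v^{*}$ being an involution, and it is precisely what guarantees that adjoining the partners of $D$ does not eliminate the external neighbour that each vertex outside $D_{r}$ must retain.
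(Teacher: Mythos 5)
Your proof is correct, but it follows a genuinely different route from the paper. The paper does not actually prove this cited theorem of Zverovich and Poghosyan at all; instead it proves a stronger statement (Theorem~\ref{rand_th}) that drops the perfect-matching hypothesis entirely, by a one-shot probabilistic construction: pick $A$ randomly with probability $p$, let $B_{A}$ be the vertices undominated by $A$ and $C_{A}$ the vertices outside $A\cup B_{A}$ all of whose neighbours lie in $A\cup B_{A}$, and verify that $A\cup B_{A}\cup C_{A}$ is already a restrained dominating set of expected size at most $2np+2ne^{-p(\delta+1)}$. You instead use randomness only to produce an ordinary dominating set $D$ meeting the Arnautov--Payan bound, and then use the matching deterministically, doubling $D$ by adjoining matching partners; the key observation that $D\cup D^{*}$ is invariant under the matching involution, so that $V\setminus D_{r}$ is partner-closed and every outside vertex keeps its partner as an external neighbour, is sound, and your remark that the $\epsilon=1$ clause is vacuous under a genuine perfect-matching hypothesis is apt. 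The trade-off is clear: your argument is more elementary in its verification step and makes transparent exactly where the matching is used, but it genuinely needs the matching; the paper's construction costs the same factor of $2$ (absorbed into bounding $|B_{A}|$ and $|C_{A}|$ separately) yet applies to every connected graph, which is why the paper presents it as the stronger result.
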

In this section we prove the following stronger result for the restrained domination number of a graph $G$.
\begin{theorem}\label{rand_th}
Let $G$ be a connected graph with $n$ vertices and minimum degree $\delta$, then $G$ has a restrained dominating set of cardinality at most $\displaystyle\frac{ 2(1+\ln(\delta+1))}{\delta+1}n$.
\end{theorem}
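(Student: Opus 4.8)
The plan is to run a single-round probabilistic construction — essentially the proof of the classical bound $\gamma(G)\le\frac{1+\ln(\delta+1)}{\delta+1}n$ — augmented by one ``repair'' set for the restrained condition. Fix a parameter $p\in(0,1]$ to be chosen at the end, and form a random set $A$ by placing each vertex of $V$ into $A$ independently with probability $p$. Let $B=\{v\in V\setminus A : N_G(v)\cap A=\emptyset\}$ be the vertices not dominated by $A$, and let $C=\{v\in V\setminus(A\cup B) : N_G(v)\subseteq A\cup B\}$ be the vertices that are isolated in $G[V\setminus(A\cup B)]$. The candidate solution is $D=A\cup B\cup C$.

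First I would verify that $D$ is a restrained dominating set for \emph{every} outcome of $A$. It is dominating because $A\cup B$ already is: any vertex outside $A\cup B$ has, by definition of $B$, a neighbor in $A$. For the restrained property take $w\in V\setminus D$; then $w\in V\setminus(A\cup B)$ and $w\notin C$, so $w$ has a neighbor $z\in V\setminus(A\cup B)$. If $z$ lay in $C$, then $z$ would be isolated in $G[V\setminus(A\cup B)]$, contradicting $w\in N_G(z)\cap(V\setminus(A\cup B))$; hence $z\notin D$, and $w$ indeed has a neighbor outside $D$. (Here $G$ connected with $n\ge 2$ gives $\delta\ge 1$, so $G$ has no isolated vertex and this argument never runs out of neighbors.)

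Next I would bound $\mathbb{E}[|D|]\le\mathbb{E}[|A|]+\mathbb{E}[|B|]+\mathbb{E}[|C|]$. Plainly $\mathbb{E}[|A|]=np$, and since $\Pr[v\in B]=(1-p)^{d_G(v)+1}\le(1-p)^{\delta+1}$ we get $\mathbb{E}[|B|]\le n(1-p)^{\delta+1}$. The heart of the argument is the bound on $\mathbb{E}[|C|]$: rather than control all neighbors of a vertex $v$ simultaneously, single out one fixed neighbor $u$ of $v$. If $v\in C$ then $v\notin A$ and $u\in A\cup B$, and splitting on the two disjoint ways $u\in A\cup B$ can occur — namely $u\in A$, or $u\notin A$ with $N_G(u)\cap A=\emptyset$, where in the latter case the event $N_G(u)\cap A=\emptyset$ already forces $v\notin A$ since $v\in N_G(u)$ — gives $\Pr[v\in C]\le(1-p)p+(1-p)^{d_G(u)+1}\le(1-p)p+(1-p)^{\delta+1}$, hence $\mathbb{E}[|C|]\le n(1-p)p+n(1-p)^{\delta+1}$.

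Putting these together, $\mathbb{E}[|D|]\le np(2-p)+2n(1-p)^{\delta+1}\le 2np+2ne^{-p(\delta+1)}$, using $1-p\le e^{-p}$. Choosing $p=\frac{\ln(\delta+1)}{\delta+1}\in(0,1]$ makes $e^{-p(\delta+1)}=\frac{1}{\delta+1}$, so $\mathbb{E}[|D|]\le\frac{2(1+\ln(\delta+1))}{\delta+1}n$; therefore some outcome of $A$ yields a set $D$ with $|D|\le\frac{2(1+\ln(\delta+1))}{\delta+1}n$, and that $D$ is the required restrained dominating set. I do not foresee a genuine obstacle here; the only step needing care is the bound on $\mathbb{E}[|C|]$, where a naive union bound over all neighbors of $v$ is far too lossy (it can overshoot by a $\delta$-like factor) and one must instead charge to a single neighbor while tracking that conditioning on $v\notin A$ costs exactly one factor $(1-p)$. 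The same construction with $p=\frac{\ln(\delta+1)}{\delta+1}$ is the randomized algorithm promised in the abstract: it outputs $A\cup B\cup C$ and meets the bound with positive probability.
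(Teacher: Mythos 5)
Your proposal is correct and follows essentially the same route as the paper: the same random set $A$, the same repair sets $B$ and $C$, the same bound $2np+2ne^{-p(\delta+1)}$, and the same choice $p=\frac{\ln(\delta+1)}{\delta+1}$. The only divergence is in bounding $\mathbb{E}[|C|]$, where the paper writes $\Pr[v\in C]\le(1-p)\bigl(p+(1-p)^{\delta+1}\bigr)^{\delta}$ by treating the neighbors' memberships in $A\cup B$ as independent and then weakens via $x^{\delta}\le x$, whereas your single-neighbor charging argument reaches the same weakened bound directly and is in fact the more rigorous justification of that step.
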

\begin{proof}
Let $p\in [0,1]$ be arbitrary. We pick randomly and independently each vertex of graph $G$ with probability $p$. Let $A$ be the set of all picked vertices, $B_{A}$ be the random set of all vertices in $V\setminus A$ that do not have any neighbor in $A$, and $C_{A}$ be the set of all vertices in $V\setminus (A\cup B)$ for which all the neighbors are in $A\cup B$. Then $E(|A|)=np$.

 For each fixed vertex $v\in V$, $Pr(v\in B_{A})=Pr(v$ and all its neighbors are not in $A)\leq (1-p)^{\delta+1}$. Hence $E(|B_{A}|)\leq n(1-p)^{\delta+1}$.

$$Pr(v\in A\cup B_{A})=Pr(v\in A)+Pr(v\in B_{A})\leq p+(1-p)^{\delta+1}$$
$$Pr(v\notin A\cup B_{A})=1-Pr(v\in A\cup B_{A})=1-(Pr(v\in A)+Pr(v\in B_{A}))\leq  1-p$$

 For each fixed vertex $v\in V$, $Pr(v\in C)=Pr(v\notin A\cup B_{A}$ and all its neighbors are in $A\cup B_{A})\leq (1-p)(p+(1-p)^{\delta+1})^{\delta}$. Hence $E(|B_{A}|)\leq n(1-p)(p+(1-p)^{\delta+1})^{\delta}$.

 Now
 \begin{eqnarray}
 \nonumber  E(|A|+|B_{A}|+|C_{A}|)& \leq & np+n(1-p)^{\delta+1}+n(1-p)(p+(1-p)^{\delta+1})^{\delta}\\
\nonumber & \leq & np+n(1-p)^{\delta+1}+n(1-p)(p+(1-p)^{\delta+1})\\
\nonumber & \leq & np+n(1-p)^{\delta+1}+np(1-p)+n(1-p)^{\delta+2}\\
\nonumber & \leq & np+n(1-p)^{\delta+1}+np+n(1-p)^{\delta+1}\\
\nonumber & \leq & 2np+2n(1-p)^{\delta+1}\\ 
\nonumber & \leq & 2np+2n e^{-p(\delta+1)} \hspace*{1.1cm} (\because \hspace*{.2cm} 1-p\leq e^{-p})
\end{eqnarray}

Hence, there is a at least one choice of $A$ such that $|A|+|B_{A}|+|C_{A}|\leq  2np+2n e^{-p(\delta+1)}$. Also, the set $D_{r}=A\cup B_{A} \cup C_{A}$ is a restrained dominating set of $G$ and $|D_{r}|=|A|+|B_{A}|+|C_{A}|$. Hence there exist at least one restrained dominating set, say $D_{r}$ of $G$ such that $|D_{r}|\leq 2np+2n e^{-p(\delta+1)}$. Thus $\gamma_{r}(G)\leq 2np+2n e^{-p(\delta+1)}$. This holds for any $p\in [0,1]$.

Now to find the minimum value of the bound, we differentiate the right hand side with respect to $p$ and set it equal to zero. The minimum value of right hand side will be obtained at $p=\frac{\ln(\delta+1)}{\delta+1}$. When we put this value of $p$ in the upper bound, we get $\gamma_{r}(G)\leq \displaystyle\frac{ 2(1+\ln(\delta+1))}{\delta+1}n$.
\end{proof}

Now we present a randomized algorithm to find a restrained dominating set $D_{r}$ of graph $G$. The algorithm is based on the probabilistic construction used in Theorem~\ref{rand_th}. The expectation of cardinality of restrained dominating set $D_r$ returned by the following algorithm satisfies the upper bound of Theorem~\ref{rand_th}. Note that the algorithm can be implemented in linear-time.

\begin{algorithm}[H]
\caption{RANDOMIZED-RESTRAINED-DOM-SET(G)}
 \textbf{Input:} A graph $G=(V,E)$.\\
\textbf{Output:} A restrained dominating set $D_{r}$ of $G$.\\
\Begin{
Compute $p=\frac{\ln(\delta+1)}{\delta+1}$;\\
Initialize $A=B_{A}=C_{A}=\emptyset$;\\
\ForEach { $v\in V(G)$ }
{ with the probability $p$ decide if $v\in A$ or $v\notin A$; }
\ForEach {$v\in V(G)\setminus N_{G}[A]$ }
{$B_{A}=B_{A}\cup \{v\}$;}
\ForEach {$v\in V(G)\setminus (A\cup B)$ }
{
\If {$N_{G}(v)\subseteq A\cup B$}
{$C_{A}=C_{A}\cup \{v\}$;}
}
Put $D_{r}=A\cup B_{A} \cup C_{A}$;\\
return $D_{r}$;
}
\end{algorithm}

\section{Conclusion}
\label{sec:11}
In this paper, we studied algorithmic aspects of the MRD problem.
The RDD problem is already known to be NP-complete for chordal graphs,
split graphs, planar graphs, undirected path graphs, and bipartite graphs. On the positive side, polynomial time algorithms
are known to solve the MRD problem in trees and proper interval graphs. We proved that
RDD problem is NP-complete for doubly chordal graphs, a subclass of chordal graphs, and
proposed a polynomial time algorithm to solve the MRD problem
in block graphs, a subclass of doubly chordal graphs. We also proposed algorithms to solve the MRD problem in threshold
graphs (subclass of split graphs), cographs (superclass of threshold graphs), and chain graphs (subclass of bipartite graphs).
We also observed that there exist graph classes for which domination and restrained domination differ in complexity.
It is still interesting to look at the complexity status of the problem for other important subclasses of bipartite graphs and chordal graphs.
In addition, We provided an upper bound on the restrained domination number of a graph in terms of number of vertices and degree of graph.
We also proposed a randomized algorithm to compute a restrained dominating set of a graph, and proved that the cardinality of the
restrained dominating set returned by our algorithm satisfies our upper bound with a positive probability.


\begin{thebibliography}{30}
\bibitem{dual}
A. Brandst\"{a}dt , F.F. Dragan, V. Chepoi, V. Voloshin, Dually chordal graphs, SIAM Journal on Discrete Mathematics 11 (1998) 437-455.

\bibitem{andreas}
A. Brandst\"{a}dt, V. Chepoi, F.F. Dragan, The algorithmic use of hypertree structure and maximum neighbourhood orderings, Discrete Applied Mathematics 82 (1998) 43-77. 

\bibitem{chen}
L. Chen, W. Zeng, C. Lub, NP-completeness and APX-completeness of restrained domination in graphs, Theoretical Computer Science 448 (2012) 1-8.

\bibitem{tec_report}
V. Chv$\acute{a}$tal, P. Hammer, Aggregation of inequalities in integer programming, Technical Report STAN-CS-75-518, Stanford University, California, 1975.

\bibitem{corneil}
D. Corneil, H. Lerchs, L. Stewart-Burlingham, Complement reducible graphs, Discrete Applied Mathematics 3 (1981) 163-174.

\bibitem{perl}
D. Corneil, Y. Perl, L. Stewart, A linear recognition algorithm for cographs, SIAM Journal on Computing 14 (1985) 926-934.


\bibitem{diestel}
 R. Diestel. Graph Theory, volume 173. Springer, Berlin, fourth edition, 2010.

\bibitem{domke1}
G.S. Domke, J.H. Hattingh, S.T. Hedetniemi, R.C. Laskar, L.R. Markus,
Restrained domination in graphs, Discrete Mathematics 203 (1999) 61-69.

\bibitem{domke2}
 G.S. Domke, J.H. Hattingh, M.A. Henning, L.R. Markus, Restrained domination in trees, Discrete Mathematics 211 (2000) 1-9.

\bibitem{gross}
D.R. Fulkerson, O.A. Gross, Incidence matrices and interval graphs, Pacific Journal of Mathematics 15 (1965) 835-855.

\bibitem {np}
M.R. Garey, D.S. Johnson, Computers and Interactability: a guide
to the theory of NP-completeness, Freeman, New York, 1979.

\bibitem{hatt1}
 J.H. Hattingh, E. Jonck, E.J. Joubert, A.R. Plummer, Nordhaus-Gaddum results for restrained domination and total restrained domination in graphs, Discrete Mathematics 308 (2008) 1080-1087.

\bibitem{hatt2}
 J.H. Hattingh, E.J. Joubert, An upper bound for the restrained domination number of a graph with minimum degree at least two in terms of order and minimum degree, Discrete Applied Mathematics 157 (2009) 2846-2858.

\bibitem{hatt3}
 J.H. Hattingh, E.J. Joubert, Restrained domination in claw-free graphs with minimum degree at least two, Graphs and Combinatorics 25 (2009) 693-706.

\bibitem{hatt4}
 J.H. Hattingh, E.J. Joubert, Restrained domination in cubic graphs, Journal of Combinatorial Optimization 22 (2011) 166-179.

\bibitem{hatt5}
J.H. Hattingh, A.R. Plummer, A note on restrained domination in trees, Ars Combinatoria (94) (2010) 477-483.

\bibitem{haynes2}
T.W. Haynes, S.T. Hedetniemi, P.J. Slater (Eds.), Domination in Graphs: Advanced Topics, Marcel Dekker, New York, 1998.

\bibitem{haynes1}
 T.W. Haynes, S.T. Hedetniemi, P.J. Slater (Eds.), Fundamentals of Domination in Graphs, Marcel Dekker, New York, 1998.

\bibitem{henning}
 M.A. Henning, Graphs with large restrained domination number, Discrete Mathematics 197/198 (1999) 415-429.

\bibitem{poon}
W. Hon, T. Kloks, H. H. Liu, S. Poon, Y. Wang, On Independence Domination, manuscript, 2013.

\bibitem{kloks}
T. Kloks, D. Kratsch,  H. M\"{u}ller. Bandwidth of chain graphs, Information
Processing Letters, 68 (1998) 313-315.

\bibitem{threshold} N. Mahadev, U. Peled, Threshold Graphs and Related Topics, in: Annals of Discrete Mathematics, vol. 56, North Holland, 1995.


\bibitem{thesis}
B. S. Panda, D. Pradhan, A linear time algorithm to compute a minimum restrained dominating set in proper interval graphs,
 \emph{Discrete Mathematics, Algorithms and Applications}, doi: 10.1142/S1793830915500202.

\bibitem{telle}
J.A. Telle, A. Proskurowski, Algorithms for vertex partitioning problems on partial k-trees, SIAM Journal on Discrete Mathematics 10 (1997) 529-550.
\bibitem{yan}
M. Yannakakis, Node- and edge-deletion NP-complete problems. In Conference
Record of the Tenth Annual ACM Symposium on Theory of Computing
(San Diego, Calif., 1978), ACM, New York, (1978) 253-264.

\bibitem{zelinka}
 B. Zelinka, Remarks on restrained domination and total restrained domination in graphs, Czechoslovak Mathematical Journal 55 (130) (2005) 393-396

 \bibitem{zverovich}
 V. Zverovich, A. Poghosyan, On Roman, Global And Restrained Domination in Graphs, Graphs and Combinatorics 27 (2011) 755-768

\end{thebibliography}
\end{document}